\newtheorem*{rep@theorem}{\rep@title}
\newcommand{\newreptheorem}[2]{%
\newenvironment{rep#1}[1]{%
 \def\rep@title{#2 \ref{##1}}%
 \begin{rep@theorem}}%
 {\end{rep@theorem}}}
\def\zeit{\number\shorthour:\ifnum\shortminute<10 0\number\shortminute
\else\number\shortminute\fi}
\newcommand{\mydriver}{hypertex}
 \renewcommand{\mydriver}{pdftex}
\newcommand{\edgesketch}{\ensuremath{\mathbf{\Delta}}}
\newcommand{\vertexsketch}{\ensuremath{\mathbf{\Lambda}}}
\newcommand{\BHH}{\textrm{BHH}}
\newcommand{\Var}{\textrm{Var}}
\newcommand{\E}{\textrm{E}}
\newcommand{\poly}{\ensuremath{\mathrm{poly}}}
\newcommand{\supp}{\textrm{supp}}
\newcommand{\x}{\textbf{x}}
\newcommand{\y}{\textbf{y}}
\newcommand{\eps}{\varepsilon}
\newcommand{\cc}{\textrm{cc}}
\newcommand{\scc}{\textrm{scc}}
\newcommand{\MST}{\textrm{MST}}
\newcommand{\BB}{\ensuremath{\mathcal{B}}}
\newcommand{\R}{\ensuremath{\mathbb{R}}}
\newcommand{\sketch}{\ensuremath{\mathcal{A}}}
\newcommand{\sparsesketch}{\ensuremath{\mathbf{\Upsilon}}}
\newcommand{\0}{\textbf{0}}
\newcommand{\junk}[1]{{}}
\newtheorem{fact}{Fact}
\newtheorem{claim}{Claim}
\providecommand{\abs}[1]{\left|#1\right|}
\providecommand{\agbracket}[1]{\langle#1\rangle}
\theoremstyle{plain}
\newtheorem{theorem}{Theorem}[section]
\newtheorem{lemma}[theorem]{Lemma}
\newtheorem{definition}[theorem]{Definition}
\theoremstyle{definition}
\title{Dynamic Graph Stream Algorithms in $o(n)$ Space}
\author{Zengfeng Huang\footnote{University of New South Wales, Sydney, Australia. Email: \url{zengfeng.huang@unsw.edu.au}.}
\and
Pan Peng\footnote{Department of Computer Science, TU Dortmund; State Key Laboratory of Computer Science, Institute of Software, Chinese Academy of Sciences. Supported by ERC grant No. 307696. Email: \url{pan.peng@tu-dortmund.de}.}
}
\date{}
\begin{document}
	
	
	\begin{titlepage}
		
		\maketitle
		
		\thispagestyle{empty}
		
\begin{abstract}
	In this paper we study graph problems in dynamic streaming model, where the input is defined by a sequence of edge insertions and deletions. As many natural problems require $\Omega(n)$ space, where $n$ is the number of vertices, existing works mainly focused on designing $\tilde{O}(n)$ space algorithms. Although sublinear in the number of edges for dense graphs, it could still be too large for many applications (e.g. $n$ is huge or the graph is sparse).  
	In this work, we give single-pass algorithms beating this space barrier for two classes of problems. 
	
	We present $o(n)$ space algorithms for estimating the number of connected components with additive error $\varepsilon n$ and $(1+\varepsilon)$-approximating the weight of minimum spanning tree, for any small constant $\varepsilon>0$. The latter improves previous $\tilde{O}(n)$ space algorithm given by Ahn et al. (SODA 2012) for connected graphs with bounded edge weights. 
	
	We initiate the study of approximate graph property testing in the dynamic streaming model, where we want to distinguish graphs satisfying the property from graphs that are $\varepsilon$-far from having the property. We consider the problem of testing $k$-edge connectivity, $k$-vertex connectivity, cycle-freeness and bipartiteness (of planar graphs), for which, we provide algorithms using roughly $\tilde{O}(n^{1-\varepsilon})$ space, which is $o(n)$ for any constant $\varepsilon$. 
	
	To complement our algorithms, we present $\Omega(n^{1-O(\varepsilon)})$ space lower bounds for these problems, which show that such a dependence on $\varepsilon$ is necessary. 
\end{abstract}

\end{titlepage}

\section{Introduction}
Graphs or networks are a natural way to describe structural information. For example, users of Facebook and the acquaintance relations among them form a social network, the proteins together with interactions between them define a biological network, and web-pages and hyperlinks give rise to a huge web graph. Due to the rapid development of information technology, many such graphs become extremely large, and are constantly changing, which poses great challenges for analyzing their structures. Over the last decade,  the data stream model~\cite{Mut05:data} has proven to be successful in dealing with \emph{big data}. In this model, the algorithm should make only one pass (or a few passes) over the stream, and use sublinear working space. The time required to output the final answer and process each element is also important. There is a growing body of work studying graph problems over data streams. Graph streams were first considered by Henzinger et al.~\cite{HRR99:stream}, and later have been extensively studied in the \textit{insertion-only} model~(eg., \cite{FKMSZ05:graph,FKM+08:graph,Mut05:data}), where there is no edge deletion in the stream. Recently, starting from the seminal works of Ahn, Guha and McGregor \cite{AGM12:linear, AGM12:sketch}, the interest has shifted to the \emph{dynamic streaming model}, where the edges can be both inserted and deleted (see eg., \cite{KLMS14:spectral,KW14:spanners,ACGMW15:correlation,AKLY15:tight,CCE+15:kernelization,CCH+15:parameter,BS15:sublinear,Kon15:maximum,BHNT15:densest,MTVV15:densest,EHW15:densest,GMT15:vertex}). In this setting, most algorithms designed are linear sketch-based, which is also an effective technique for processing distributed graphs. For more information about graph streaming algorithms see the recent survey by McGregor \cite{Mcg14:survey}. 

For graph streams, both insertion-only and dynamic, the research in the past has mostly focused on the  \emph{semi-streaming model}, in which the algorithms are allowed to use $\tilde{O}(n)$ space, where $n$ is the number vertices in the graph. (For notational convenience, we will use $\tilde{O}(g)$ and $\tilde{\Omega}(g)$ to hide $\poly\log(g)$ factors.) The reason behind this is that even in the insertion-only model, many natural graph problems require $\Omega(n)$ space (e.g. testing if the graph is connected \cite{FKM+08:graph}). 
Note that the allowed space in semi-streaming model is sublinear in the input size as the number of edges of the graph might be as large as $\Omega(n^2)$. However, in many real applications $n$ is huge and the input graph is already very sparse, an $\tilde{O}(n)$ algorithm might be even worse than just storing all the edges. From this perspective, one may naturally ask the question \emph{which kind of problems can be solved with even less space, i.e., $o(n)$ space}.  

To the best of our knowledge, very few results are known in this direction. Chitnis et al. \cite{CCH+15:parameter} and Fafianie and Kratsch~\cite{FK14:parameterized} introduced parameterized graph stream algorithms which may only use $o(n)$ space with some promise of the size of the solution. This parameterized setting has been further investigated in~\cite{CCE+15:kernelization}. In addition, it has been shown that the size of the maximum matching can be approximated within constant factor in $\tilde{O}(n^{4/5})$ space for graphs with bounded arboricity~\cite{EHL+15:matching, CCE+15:kernelization,BS15:sublinear}.


In this paper, we study two classes of graph problems that admit single-pass $o(n)$ space algorithms in the dynamic streaming model. The first class contains the problems of estimating the number of connected components and the weight of minimum spanning tree (MST). We show that one can estimate the number of connected components within an \emph{additive} error of $\varepsilon n$ with $o(n)$ space and post-processing time, for any constant $\varepsilon>0$. We also present an algorithm to $(1+\varepsilon)$-approximate the weight of MST with $o(n)$ space and post-processing time for connected graphs with bounded edge weights, which improves the best known algorithm with $\tilde{O}(n)$ space in the same setting given by Ahn et al.~\cite{AGM12:linear}. It is worthy noting that the problem of estimating the number of connected components within small \emph{multiplicative} error requires $\Omega(n)$ space, as it is generally harder than the problem of (exactly) testing graph connectivity; and that estimating the weight of MST for graphs with arbitrarily large edge weights (e.g., $\Omega(\log n)$) requires $\Omega(n)$ space (see Theorem~\ref{thm:lower_mst}). Previously these two problems have been studied in 
the framework of \emph{sublinear time algorithms}~(see eg.~\cite{CRT05:MST,RS11:sublinear}).

The second class consists of problems that are relaxations of deciding graph properties. Given a huge graph, it is very useful to know whether the graph has some predetermined property, such as $k$-connectivity, bipartiteness, cycle-freeness and etc., which provide valuable information about the graph.
However, besides the requirement of $\Omega(n)$ space, exactly testing of these properties sometimes is a too strong requirement for analyzing highly dynamic graphs, since the answer may change in the next second due to an insertion or deletion of a single edge. In this paper, we initiate the study of \emph{approximate graph property testing} in the dynamic streaming model: we want to test whether a graph satisfies some property or one has to modify a small constant fraction of edges to make it have the property. This notion of approximation is adapted from the framework of \emph{property testing}~\cite{GGR98:testing,GR02:property,PR02:diameter}, and a large number of existing literatures have given efficient testing algorithms (called \emph{testers}) for many properties under different query models (see surveys~\cite{Gol11:introduction,Ron10:algorithmic}). We show that some fundamental properties can be tested in both $o(n)$ space and post-processing time in the dynamic streaming model and we also present close lower bounds for these problems which hold even in the insertion-only model. We remark that McGregor~\cite{Mcg14:survey} also suggested to study the (approximate) property testers in graph streaming model, and asked whether more space-efficient algorithms exist for these problems, and we thus give affirmative answer to this question.

\subsection{Our results} 
Now we formally state our main results. Our results regarding estimating the number of connected components and the MST weight are as follows.
\begin{itemize}
	\item {\bf Estimating the number of connected components.} We present a dynamic streaming algorithm that estimates the number of connected components within additive error $\varepsilon n$ in $\tilde{O}(n^{1-\varepsilon +\varepsilon^{q+1}})$ space and post-processing time for any constant $q\geq 1$. We note that a lower bound of $\Omega(n^{1-O(\varepsilon)})$ for this problem follows from the work~\cite{VY11:cycle}.
	\item {\bf Estimating the weight of minimum spanning tree (MST).}
	In this problem, we want to estimate the weight of the MST of a graph with edge weights in the set $\{1,2,\cdots,W\}$. We give a dynamic streaming algorithm that computes a $(1+\varepsilon)$-approximation of the MST weight and uses space and post-processing time $\tilde{O}(Wn^{1-\frac{\varepsilon}{W-1}+\frac{\varepsilon^t}{(W-1)^t}})$ for any constant $t\geq 1$. By an argument in \cite{CRT05:MST}, the result can be extended to non-integral weights, as long as the ratio between the largest and the smallest weight is bounded.
	A space lower bound of $\Omega(n^{1-\frac{4\varepsilon}{W-1}})$ is shown for this problem.
\end{itemize}

We also present approximate testing algorithms for a number of fundamental graph properties. Before stating the performance of these algorithms, we first introduce some definitions. Given a graph property $\Pi$, an $m$-edge graph $G$ is called \emph{$\varepsilon$-far from having $\Pi$} if one has to modify more than $\varepsilon m$ edges of $G$ to get a graph $G'$ satisfying $\Pi$. This distance definition is adapted from~\cite{PR02:diameter} and is most suitable for general graphs where neither edge density nor maximum degree is restricted. We call an algorithm a \emph{(dynamic) streaming tester} for $\Pi$, if it makes a single-pass over a stream of edge insertions and deletions, with probability at least $2/3$, accepts any graph satisfying $\Pi$, and rejects any graph that is $\varepsilon$-far from having $\Pi$. 

We give sketch-based streaming testers for properties of being connected, $k$-edge connected, $k$-vertex connected, cycle-freeness and bipartite (for planar graphs). The performance of our testers are summarized in Table \ref{tab:1}. 
We stress that most of our testers have (asymptotically) the same post-processing time as the space they used except for testing $k$-edge connectivity when $k\ge \Omega(n^{\eps/(1+\eps)})$ and $k$-vertex connectivity when $k\ge \Omega(n^{\varepsilon/(4+\varepsilon)})$. 
\begin{table}[h]
	\centering
	\begin{tabular}{|c|c|c|c|c|c}
		\hline
		&  {Space } & {Space lower bound} \\
		& $\tilde{O}$ &   ${\Omega}$   \\
		\hline
		Connectivity& $n^{1-\eps}$& $n^{1-8\eps}$\\
		$k$-edge connectivity&$k^{1+\varepsilon}\cdot n^{1-\varepsilon}$ &\\
		$k$-vertex connectivity& $\frac{k^{1+\varepsilon/4}}{\varepsilon}\cdot n^{1-\varepsilon/4}$&\\
		Cycle-freeness&$n^{1-\varepsilon+\varepsilon^2}$& $n^{1-8\varepsilon}$\\
		Bipartiteness of planar graphs&$n^{1-\Omega(\varepsilon^2)}$&$n^{1-4\varepsilon}$\\
		\hline
	\end{tabular}
	\caption{Upper and lower bounds of streaming testers.}
	\label{tab:1}
\end{table}
\subsection{Our techniques}
To estimate the number of connected components with small additive error $\varepsilon n$, we note that it is sufficient to estimate the number $\scc(G)$ of connected components of small size (i.e., $O(1/\varepsilon)$), since the number of components of size larger than this is at most $O(\varepsilon n)$ (see also~\cite{CRT05:MST}). To estimate $\scc(G)$, the following vertex sampling framework is used: we sample a sufficient large set of vertices $S$ by sampling each vertex in $G$ with some probability $p$, and then use the statistics of the sampled connected components of the original graph to estimate $\scc(G)$. For any small connected component $C$ in $G$, it is likely that all the vertices in $C$ will be sampled out. Conditioned on this, we add $1/p^{|C|}$ to our final estimator, which is the reciprocal of the probability that $C$ is entirely sampled out.
Now the task is then to identify which subsets of $S$ are connected components in the original graph. A trivial way is to check all subsets of $S$, which takes too much time. A more efficient way is to only check all the connected components in $G[S]$, since a sampled component of $G$ must also form a component in $G[S]$. We carefully use a set of linear sketches to do this. More specifically, we first recover all connected components in $G[S]$ by invoking a sketch-based streaming algorithm given in~\cite{AGM12:linear}, which only needs space near-linear in $|S|$. Then we use (different) linear sketches to check if any of these components is indeed a connected component of the original graph. We remark that the first set of linear sketches of a vertex $v$ sketch its neighborhood information in $G[S]$, while the second set sketch its neighborhood information in $G$.
Our $o(n)$ space streaming algorithm for $(1+\varepsilon)$-approximating the weight of MST follows via a connection between the number of connected components and the weight of MST established in~\cite{CRT05:MST}.

To give testers for some graph property $\Pi$ in dynamic streaming model, we start from the observation that if a graph $G$ is far from having $\Pi$, then typically,   
%
%
%
there exist many small disjoint subgraphs, each of which is a witness that the graph $G$ does not satisfy $\Pi$. (For example, if $\Pi$ is connectivity, then there exists at least $\Omega(\varepsilon m)$ connected components of size at most $O(1/\varepsilon)$ in a graph that is $\varepsilon$-far from being connected.) This implies that by sampling a sufficient large set of vertices, with high probability, one of such subgraphs will be entirely sampled. Checking which vertices form a witness of the original graph can then be done by using the aforementioned framework. Different sketches will be used for testing different properties. 

%
%

To prove lower bounds for our studied problems, we give reductions from \emph{Boolen Hidden Hypermatching (BHH)} problem that was studied in~\cite{VY11:cycle}. Our reductions share similarity with the reduction in~\cite{VY11:cycle} to the cycle-counting problem and the reductions in~\cite{KKS15:maxcut,KK15:sketching} to the approximate max-cut problem.

\subsection{Related work}
Ahn et al.~\cite{AGM12:linear} initiated the study of graph sketches, and gave dynamic semi-streaming algorithms for computing a spanning forest (which can be used to count the exact number of connected components), and $(1+\varepsilon)$-approximate the weight of MST. They also proposed algorithms to \emph{exactly} testing of a set of properties, including testing connectivity, $k$-edge connectivity, and bipartiteness. Recently, Guha et al. gave dynamic streaming algorithms for exactly testing of $k$-vertex connectivity~\cite{GMT15:vertex}. All these algorithms use $\tilde{O}(n)$ space ($\tilde{O}(kn)$ for $k$-connectivity).
On the other hand, the randomized space lower bounds for these exact testing problems were known to be $\Omega(n)$ in the insertion-only model~\cite{FKMSZ05:graph,FKM+08:graph}. Recently, Sun and Woodruff improved these lower bounds to $\Omega(n\log n)$~\cite{SW15:bound}. Verbin and Yu~\cite{VY11:cycle} proved a lower bound for cycle-counting, which implied a lower bound of $\Omega(n^{1-O(\varepsilon)})$ for estimating the number of components.

In the \emph{random order} insertion-only model Kapralov et al. \cite{KKS14:matching} gave a one pass streaming algorithm that estimates the maximum matching size with polylogarithmic approximation ratio in polylogarithmic space. Although sublinear in $n$, the model considered is very different from ours.

Sublinear time algorithms for estimating the number of connected component and the weight of MST were first given by Chazelle et al.~\cite{CRT05:MST}. Later these two problems have been further considered in geometric settings~\cite{CEMNRS05:MST,CS09:metricMST,FIS05:sampling}. In particular, Frahling et al. studied the problem of $(1+\varepsilon)$-approximating the weight of MST in dynamic geometric data stream~\cite{FIS05:sampling}.
%

There has been a rich line of work on graph property testing in the query model~(see surveys \cite{Ron10:algorithmic,Gol11:introduction}) and the goal there is to design fast algorithms that make as few queries as possible. The query models that are mostly related to ours are bounded degree model and general graph model. In particular, our definition of $\varepsilon$-far is adapted from the general graph model. Goldreich and Ron~\cite{GR02:property} initiated the study of property testers in bounded degree graph model, and gave testers for connectivity, $k$-edge connectivity, $2,3$-vertex connectivity, cycle-freeness, Eulerianity. Testing $k$-vertex connectivity in bounded degree graphs for arbitrary constant $k$ was given in~\cite{YI08:vertex}. These testers have later been generalized to general graph model~\cite{PR02:diameter,OR11:eulerianity}. Testing bipartiteness in planar graphs was studied in~\cite{CMKS11:planar}.

After having submitted the paper, we became aware that Hossein Jowhari~\cite{Jow:estimating} has independently studied the problem of estimating the number of connected components and provided similar results as ours, while he did not consider the streaming property testers considered here. 

\section{Preliminaries}\label{sec:preliminaries}
\subsection{Notations} \label{sec:notation}
We use $[n]=\{1,\cdots, n\}$ to denote the vertex set of the graph $G$ defined by the stream, and let $m$ denote the number of edges of $G$. For an undirected graph $G=([n],E)$ and a vertex $i\in [n]$, we let $\Gamma(i)$ denote all the neighbors of $i$. For a set $C\subseteq [n]$, let $\Gamma(C)$ denote 
the set of vertices in $V\setminus C$ that have at least one neighbor in $C$, that is, $\Gamma(C)=\cup_{i\in C}\Gamma(i) \setminus C$. Let $E(C,V\setminus C)$ denote the set of edges crossing $C$ and $V\setminus C$. We will use $G[C]$ to denote the subgraph induced by $C$. 

For each vertex $i$, we define two vectors $\edgesketch^i\in\{-1,0,1\}^{\binom{n}{2}}$ and $\vertexsketch^i\in \{0,1\}^n$ to encode the neighborhood information of $i$ as follows:
\begin{equation*}
\begin{aligned}[c]
\edgesketch^i_{j,k}= \left\{\begin{array}{ll}
1 & \textrm{if $i=j<k$ and $(j,k)\in E$}\\
-1 & \textrm{if $j<k=i$ and $(j,k)\in E$}\\
0 & \textrm{otherwise}
\end{array}
\right.
\end{aligned}
\qquad
\begin{aligned}[c]
\vertexsketch^i_{j}= \left\{\begin{array}{ll}
1 & \textrm{if $j\in \Gamma(i)$ or $j=i$}\\
0 & \textrm{otherwise}
\end{array}
\right.
\end{aligned}
\end{equation*}

By simple induction arguments, it is easy to prove that for any vertex set $C\subset V$, the nonzero entries in the vector $\edgesketch^C:=\sum_{i\in C}\edgesketch^i$ corresponds to the edges between $C$ and its complement $V\setminus C$. The nonzero entries in $\sum_{i\in C}\vertexsketch^i$ corresponds exactly to vertices in $C\cup\Gamma(C)$. 
\subsection{Linear sketches}\label{sec:pre}
Linear sketch (or sketch for short) is a powerful tool widely used in the streaming model and other areas. Given a large vector $\x\in\R^n$, we want to construct a small sketch $\mathcal{L}(\x)$, from which certain properties of $\x$ can be recovered. We call $\mathcal{L}$ a linear sketch if $\mathcal{L}(\x +\y)=\mathcal{L}(\x)+\mathcal{L}(\y)$ for all $\x,\y$, and this additive property make it trivial to implement linear sketches in the dynamic streaming model.
As in the previous works, we will use linear sketches as our main tool. 



\paragraph{AGM sketch}
We will use a dynamic streaming algorithm for constructing a spanning forest of a graph by Ahn, Guha and McGregor~\cite{AGM12:linear}, which is summarized as follows. 
\begin{theorem}[AGM sketch \cite{AGM12:linear}]
	\label{thm:agm12_spanning_tree}
	There exists a single-pass sketch-based dynamic streaming algorithm that uses $O(n\log^3 n)$ space, and recovers a spanning forest of the graph with probability $0.99$. The recovery time of the algorithm is $\tilde{O}(n)$, and the update time is $\poly\log n$.
\end{theorem}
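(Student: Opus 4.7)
The plan is to implement a Borůvka-style spanning-forest algorithm on top of an $\ell_0$-sampling sketch of each vertex's edge-incidence vector $\edgesketch^i$ as defined in Section~\ref{sec:notation}. The key structural observation is already noted there: by the cancellation property, for any vertex set $C$ the sum $\edgesketch^C = \sum_{i \in C} \edgesketch^i$ is supported exactly on the edges in $E(C, V \setminus C)$. Therefore, an $\ell_0$-sampler applied to $\edgesketch^C$ returns a uniformly random edge leaving $C$, or reports empty if and only if $C$ is a union of connected components. Because the sketch is linear, one obtains the sketch of $\edgesketch^C$ by simply summing the sketches of its constituent vertices, which is exactly what is needed to run Borůvka over the data stream.

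I would then organize the algorithm into $r = O(\log n)$ Borůvka rounds. For each round $j = 1, \ldots, r$, I maintain, independently for every vertex $i$, a fresh $\ell_0$-sampling sketch $\mathcal{L}_j(\edgesketch^i)$. In round $j$, given the current partition of $[n]$ into super-vertices, I compute $\mathcal{L}_j(\edgesketch^C) = \sum_{i \in C} \mathcal{L}_j(\edgesketch^i)$ for each super-vertex $C$, extract a sample edge if one exists, and contract all recovered edges into the evolving forest. The standard Borůvka analysis shows that the number of super-vertices shrinks by at least a factor of two per round as long as every sampler succeeds, so $r = O(\log n)$ rounds recover a spanning forest; tuning the per-sampler failure probability to $1/\poly(n)$ and union-bounding over all rounds and super-vertices gives overall success probability at least $0.99$.

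For the resource bounds, I would invoke a standard $\ell_0$-sampler (for example, that of Jowhari, Saglam and Tardos) on the universe $\binom{[n]}{2}$, which uses $O(\log^2 n)$ bits, has $\poly\log n$ update and recovery time, and succeeds with probability $1 - 1/\poly(n)$. Since we store $r = O(\log n)$ such samplers per vertex, the total space is $n \cdot O(\log n) \cdot O(\log^2 n) = O(n \log^3 n)$. Each edge insertion or deletion updates the incidence vectors of two endpoints, touching $O(\log n)$ samplers in total at cost $\poly\log n$ each, yielding $\poly\log n$ update time. At recovery, each Borůvka round performs at most $n$ sample extractions and sketch additions of cost $\poly\log n$, so the total recovery time is $\tilde{O}(n)$.

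The main obstacle is a subtle independence issue: the set $C$ whose sketch we sample in round $j$ is itself determined by the sample edges produced in rounds $1, \ldots, j-1$, and is therefore correlated with the randomness used to build those earlier sketches. The clean fix, which I would carefully justify, is to use fully independent sampler families $\mathcal{L}_1, \ldots, \mathcal{L}_r$ across rounds, so that when $\mathcal{L}_j(\edgesketch^C)$ is queried, $\mathcal{L}_j$ is independent of the partition $C$ derived from the earlier rounds. Conditioning on the success of the first $j-1$ rounds and applying the $\ell_0$-sampler guarantee to the $O(n)$ super-vertices of round $j$, followed by a union bound across all $r$ rounds, is what chains the per-sampler correctness into the stated $0.99$ success probability.
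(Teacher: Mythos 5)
Your reconstruction is correct and is essentially the standard proof of this result from Ahn, Guha and McGregor: Bor\r{u}vka contraction driven by $\ell_0$-samplers on the signed incidence vectors $\edgesketch^i$, with fresh independent sampler families per round to handle the adaptivity of the super-vertex partition. The paper itself imports this theorem as a black box from \cite{AGM12:linear} without proof, so there is nothing further to compare.
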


\paragraph{AMS sketch} 
To check whether the input vector $\x$ is $\0$ or not, one can simply maintain a constant approximation of its \emph{second frequency moment}, that is $F_2(\x):=\sum_i{x_i^2}$, which can be done in $O(\log n)$ space by using the classical AMS sketch that was introduced by Alon, Matias and Szegedy~\cite{AMS96:space}.


\paragraph{Exact $k$-sparse recovery} 
We call a vector $k$-sparse if $|\supp(\x)|\le k$. Given a non-zero vector $\x\in\R^n$, the goal here is to recover $\x$ if $\x$ is $k$-sparse, otherwise outputs {\bf Fail}. We have the following result from \cite{Pri11:efficient}.
\begin{lemma}[\cite{Pri11:efficient}]\label{lem:ksparse}
	There exists an $O(k\log n \log_k\delta^{-1})$ space sketch-based algorithm that takes as input a non-zero vector $\x\in\R^n$, and with probability $1-\delta$, recovers $\x$ if $\x$ is $k$-sparse, otherwise outputs {\bf Fail}. The update time is $O(\poly\log n)$ and the recovery time is $O(k\cdot\poly\log n)$.
\end{lemma}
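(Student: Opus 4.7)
The plan is to reduce exact $k$-sparse recovery to many instances of a $1$-sparse recovery primitive via random hashing, and then to amplify the success probability through $O(\log_k \delta^{-1})$ independent levels.

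First I would build a linear $1$-sparse recovery sketch for a vector $\y \in \R^n$ promised to have at most one nonzero entry. Maintain three linear counters: $s_0 = \sum_i y_i$, $s_1 = \sum_i i \cdot y_i$, and a polynomial fingerprint $\phi(\y) = \sum_i y_i\, r^i \bmod p$ for a uniformly random $r \in \mathbb{F}_p$ with $p = \poly(n)/\delta$. Each uses $O(\log n)$ bits and is trivially maintainable under stream updates. If $\y$ is truly $1$-sparse with value $v$ at coordinate $j$, then $s_0 = v$ and $s_1 = jv$, so we recover $(j,v)$ and verify by checking $\phi(\y) = v \cdot r^j$; if $\y$ has two or more nonzeros, Schwartz--Zippel guarantees that any candidate proposed by the recovery procedure fails the verification except with probability $O(n/p)$.

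Next I would hash $[n]$ into $B = \Theta(k)$ buckets via a pairwise-independent hash $h:[n]\to[B]$ and maintain an independent $1$-sparse sketch in each bucket for the restriction of $\x$ to $h^{-1}(b)$. If $\x$ is $k$-sparse, pairwise independence gives that each nonzero coordinate is the unique nonzero of its bucket with constant probability (choosing $B$ a sufficiently large constant times $k$). To boost the probability that \emph{all} nonzero coordinates are recovered from a constant to $1-\delta$, I would stack $L = O(\log_k \delta^{-1})$ independent hash tables and apply a peeling procedure: at each level, recover the currently-isolated items, subtract them off from the remaining levels using linearity of the sketch, and iterate. Each iteration is designed to reduce the expected residual sparsity by a $\Theta(k)$ factor, so after $L$ levels the expected number of unrecovered items is much less than $1$; Markov then gives success probability at least $1-\delta$. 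The resulting space is $O(k \log n \log_k \delta^{-1})$, update time $O(\poly\log n)$ since only $O(L)$ counters are touched per stream update, and recovery time $O(k \cdot \poly\log n)$.

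To detect inputs that are not $k$-sparse and output \textbf{Fail}, I would maintain one extra global polynomial fingerprint $\Phi(\x) = \sum_i x_i (r')^i$ for a fresh random $r' \in \mathbb{F}_p$. After recovery produces a candidate $\xx$, check $\Phi(\x) = \Phi(\xx)$; by Schwartz--Zippel this equality holds iff $\x = \xx$ except with probability $\delta$. If any bucket verification fails, more than $k$ distinct items are produced, or the global check rejects, output \textbf{Fail}. The main obstacle is obtaining the $\log_k \delta^{-1}$ factor rather than the naive $\log \delta^{-1}$: this requires that the residual sparsity actually shrink by a $\Theta(k)$ factor per level (not merely by a constant), careful conditioning so that failures from an earlier round do not propagate into later ones, and ensuring the per-bucket verification catches \emph{every} bucket with two or more nonzeros with overwhelming probability so that no spurious entries are injected into the residual before the next peeling step.
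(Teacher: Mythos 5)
The paper offers no proof of this lemma: it is imported as a black box from \cite{Pri11:efficient}, so there is nothing internal to compare against. Your plan reconstructs essentially the standard construction underlying that citation --- per-bucket $1$-sparse identification with fingerprint verification, hashing into $\Theta(k)$ buckets, $O(\log_k\delta^{-1})$ independent levels with peeling, and a global fingerprint to certify \textbf{Fail} --- and you correctly isolate the one genuinely delicate point (that the $\log_k\delta^{-1}$ level count comes from the probability that a small set of mutually colliding items survives \emph{every} level, roughly $k^{-\Omega(L)}$, rather than from a per-level $\Theta(k)$ shrinkage of the expected residual, which does not hold at the first level).
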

\section{Estimating the number of connected components and MST weight}\label{sec:cc}

In this section, we present and analyze our algorithms for estimating the number of the connected components in a graph and $(1+\varepsilon)$-approximating the weight of the MST. 

\subsection{Estimating the number of connected components}
Our first observation is that, to estimate the number of connected components within additive error $\varepsilon n$, we can simply ignore all the large components (see also~\cite{CRT05:MST}). In particular, the number of components of size larger than $\Omega(1/\varepsilon)$ is at most $O(\varepsilon n)$. Thus it will be sufficient to estimate the number of components of small size, for which we have the following theorem.

\begin{theorem}\label{thm:numccsmall}
	For any constant $t\ge 1$, there exists a one-pass dynamic streaming algorithm that uses $O(e^t n^{1-\varepsilon}\cdot\poly\log n)$ space and post-processing time to estimates the number of connected components of size at most $1/\varepsilon$ within an additive error $\varepsilon^t n$. The update time is $O(\poly\log n)$.
\end{theorem}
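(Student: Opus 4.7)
The plan is to use the vertex-sampling framework outlined in the techniques section, combining two families of linear sketches: an AGM sketch to recover the connected components of the sampled subgraph, and per-vertex AMS-style sketches of the neighborhood vectors $\edgesketch^i$ defined in Section~\ref{sec:notation} to certify which of those components are also components of the full graph $G$. I would sample $S\subseteq[n]$ by including each vertex independently with probability $p$ of order $n^{-\varepsilon}$ (up to a factor that will turn out to be $e^{O(t)}$). During the stream I maintain (i) an AGM sketch (Theorem~\ref{thm:agm12_spanning_tree}) restricted to edges with both endpoints in $S$, so that after the stream I can recover a spanning forest of $G[S]$, and hence all of its connected components; and (ii) for every $i\in S$, an AMS-style zero-tester of the vector $\edgesketch^i$, boosted to failure probability at most $1/n^2$ so that all (at most $|S|$) forthcoming tests succeed by a union bound. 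Both families are linear and updatable in $\poly\log n$ time per stream token.

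At query time the estimator $\hat X$ is assembled as follows. For each connected component $T$ of $G[S]$ with $|T|\le 1/\varepsilon$, I use linearity to form the sketch of $\edgesketch^T=\sum_{i\in T}\edgesketch^i$ and test whether $\edgesketch^T=\0$; if the test accepts, I add $1/p^{|T|}$ to $\hat X$. As recalled in Section~\ref{sec:notation}, $\edgesketch^T=\0$ certifies that $T$ has no edge to $V\setminus T$ in $G$, and since $T$ is already connected inside $G[S]\subseteq G$, this means $T$ is a connected component of $G$. Conversely, any small component $C$ of $G$ appears as a component $T=C$ of $G[S]$ precisely when $C\subseteq S$, an event of probability $p^{|C|}$. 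Letting $X_C$ denote this indicator, we have $\hat X=\sum_{C}X_C/p^{|C|}$ over small components $C$ of $G$ (modulo sketch-failure events), so $\E[\hat X]$ equals the number of small components of $G$.

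For concentration, distinct components of $G$ are vertex-disjoint, so the $X_C$ are mutually independent and $\Var(\hat X)=\sum_{C}(1-p^{|C|})/p^{|C|}\le \sum_{k=1}^{1/\varepsilon}N_k/p^k\le n/p^{1/\varepsilon}$, where $N_k$ denotes the number of size-$k$ components and we use $\sum_k N_k\le n$. Demanding this bound to be at most $\varepsilon^{2t}n^2/3$ forces the calibration $p\gtrsim \varepsilon^{-2t\varepsilon}\,n^{-\varepsilon}$, and Chebyshev's inequality then yields $|\hat X-\scc_{\le 1/\varepsilon}(G)|\le \varepsilon^t n$ with probability at least $2/3$, the standard success guarantee.

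The resulting space is dominated by the AGM sketch at $\tilde O(\E|S|)$ bits and the $|S|$ per-vertex AMS sketches of $O(\log^2 n)$ bits each, totalling $O(e^{O(t)}\,n^{1-\varepsilon}\,\poly\log n)$; post-processing is dominated by AGM spanning-forest recovery plus one subset-sum sketch evaluation per candidate component, and the stream-update time is $O(\poly\log n)$. The main technical step is the variance bound together with the matching choice of $p$: the $1/p^{1/\varepsilon}$ factor in the variance, once inverted through the $(1/\varepsilon)$-th root, introduces a multiplicative $\varepsilon^{-O(t\varepsilon)}=e^{O(t)}$ in the expected sample size $np$, and this is exactly the origin of the $e^t$ prefactor in the theorem statement. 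A secondary but standard issue is guaranteeing that the AMS zero-tester is simultaneously correct on all candidate components of $G[S]$, which is handled by the logarithmic amplification absorbed into the $\poly\log n$ factor.
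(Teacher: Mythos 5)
Your proposal is correct and follows essentially the same route as the paper: sample vertices with probability $p\approx\varepsilon^{-2t\varepsilon}n^{-\varepsilon}$, recover the components of $G[S]$ via the AGM sketch, certify true components of $G$ by AMS zero-testing $\edgesketch^C$, weight by $1/p^{|C|}$, and conclude via the same variance bound $n/p^{1/\varepsilon}$ and Chebyshev; the $e^t$ factor arises exactly as you describe. The only (minor) point the paper handles that you gloss over is that the space bound must hold deterministically rather than in expectation, which it ensures by aborting when more than $16np$ vertices are sampled.
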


By invoking Theorem \ref{thm:numccsmall} with parameter $\varepsilon'=(1-\varepsilon^q)\varepsilon$ and $t=(q+1)$, we get an estimator for the number of connected components of size smaller than $1/\varepsilon'$ within additive error at most $\varepsilon^{q+1}n$. Since the number of components of size at least $1/\varepsilon'$ is at most $\varepsilon' n = \varepsilon n-\varepsilon^{1+q}n$, the estimator also approximates the total number of connected components within additive error at most $\varepsilon n$. The space of the algorithm is $\tilde{O}(e^{q+1}n^{1-\varepsilon+\varepsilon^{q+1}})$, and we have the following result.

\begin{theorem}~\label{thm:numcc}
	Let $q\ge 1$ be a constant. There exists a one-pass dynamic streaming algorithm that with constant success probability, estimates the number of connected components of a graph within an additive error $\varepsilon n$ in $O(e^{q+1} n^{1-\varepsilon+\varepsilon^{q+1}}\cdot\poly\log n)$ space and post-processing time.
\end{theorem}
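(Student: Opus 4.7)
The plan is to obtain Theorem \ref{thm:numcc} as a direct corollary of Theorem \ref{thm:numccsmall} by a careful choice of parameters, so essentially no new algorithmic work is required beyond invoking the theorem for small components and bounding the error incurred by ignoring large components.

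First, I would apply Theorem \ref{thm:numccsmall} with the parameters $\varepsilon' := (1-\varepsilon^q)\varepsilon = \varepsilon - \varepsilon^{q+1}$ and $t := q+1$. This gives a one-pass dynamic streaming algorithm that, with constant probability, estimates the number of connected components of size at most $1/\varepsilon'$ within additive error $(\varepsilon')^{q+1} n \le \varepsilon^{q+1} n$, using space and post-processing time $O(e^{q+1} n^{1-\varepsilon'} \cdot \poly\log n) = O(e^{q+1} n^{1-\varepsilon+\varepsilon^{q+1}} \cdot \poly\log n)$, and update time $\poly\log n$.

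Next, I would observe the deterministic counting fact that any graph on $n$ vertices contains at most $\varepsilon' n = \varepsilon n - \varepsilon^{q+1} n$ connected components of size strictly greater than $1/\varepsilon'$, since disjoint components of that size together consume more than $\varepsilon' n$ vertices. Consequently, the estimator from the previous step, interpreted as an estimator of the total number of connected components, incurs an additional error of at most $\varepsilon n - \varepsilon^{q+1} n$ from the components it ignores.

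Adding the two error contributions yields a total additive error of $\varepsilon^{q+1} n + (\varepsilon n - \varepsilon^{q+1} n) = \varepsilon n$, which matches the claim, while the space, post-processing time, update time, and success probability are inherited unchanged from Theorem \ref{thm:numccsmall}. There is no real obstacle here: the entire argument is a parameter-balancing exercise that trades off the threshold $1/\varepsilon'$ governing how many large components can be missed against the polynomial factor $n^{-\varepsilon'}$ in the space bound, and the identity $\varepsilon' + (\varepsilon')^{q+1}$ vs.\ $\varepsilon$ is what forces the particular choice $\varepsilon' = \varepsilon - \varepsilon^{q+1}$.
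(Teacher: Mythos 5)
Your proposal is correct and follows exactly the paper's own derivation: invoke Theorem \ref{thm:numccsmall} with $\varepsilon'=(1-\varepsilon^q)\varepsilon$ and $t=q+1$, then add the at most $\varepsilon' n$ components of size exceeding $1/\varepsilon'$ to the $\varepsilon^{q+1}n$ estimation error to get total additive error $\varepsilon n$ in space $O(e^{q+1}n^{1-\varepsilon+\varepsilon^{q+1}}\cdot\poly\log n)$. Nothing is missing.
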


Now we give the proof of Theorem \ref{thm:numccsmall}. Recall that the vectors $\edgesketch^C$ encode the information of the number of edges between $C$ and $V\setminus C$.
\begin{proof}[Proof of Theorem \ref{thm:numccsmall}.]
	Let $\scc(G)$ denote the number of connected components of size at most $1/\varepsilon$ in $G$. Our algorithm for estimating $\scc(G)$ is as follows. We first sample each vertex with probability $p:=(\varepsilon^{2t} n/16)^{-\varepsilon}$. Let $S$ be the set of sampled vertices. We then use the AGM sketch from Theorem \ref{thm:agm12_spanning_tree} to maintain a spanning forest $F$ of the subgraph induced by $S$. 
	Then for each component $C$ in $F$, we test whether $C$ is actually a connected component in $G$ by testing whether the vector $\edgesketch^C:=\sum_{v\in C}\edgesketch^v$ is $\0$, which can be done by the AMS sketch. If $\edgesketch^C=\0$, we set $X_C=1$, otherwise set $X_C=0$. Our estimator is then defined as
	$\sum_{C} \frac{X_C}{p^{|C|}}$, where $C$ ranges over all components of $F$ with size at most $\frac{1}{\varepsilon}$. See {\bf Algorithm} \ref{alg:number_connected_components} for the details.
	
	\begin{algorithm}[h]
		\caption{\texttt{EstimateNumSCC}}
		\label{alg:number_connected_components}
		\begin{algorithmic}[1]
			\State	
			Sample each vertex with probability $p:=(\varepsilon^{2t} n/16)^{-\varepsilon}$. If more than $16 np$ vertices are sampled, then abort and output {\bf Fail}. Let $S$ denote the set of sampled vertices.~\label{alg:numcc_sample_vertices}
			\State Maintain an AGM sketch of $G[S]$ using Theorem~\ref{thm:agm12_spanning_tree}.  
			\State For each $v\in S$, maintain an AMS sketch $AMS(\edgesketch^v)$, sketching the neighborhood of $v$ in $G$.
			\State {\bf Post-Processing:} 
			\State 
			Use the AGM sketch to recover a spanning forest $F$ of $G[S]$ using Theorem \ref{thm:agm12_spanning_tree}. 
			\State For each component $C\in F$, estimate $F_2(\edgesketch^C)$ using the AMS sketch  $AMS(\edgesketch^C)=\sum_{v\in C}AMS(\edgesketch^v)$, and set $X_C=1$ if $F_2=0$, otherwise set $X_C=0$. For each $1\leq \ell\leq \frac{1}{\varepsilon}$, let $X_\ell:=\sum_{C:|C|=\ell}X_C$.
			\label{alg:numcc_detect_ccs}
			\State Output $Y:=\sum_{\ell\leq \frac{1}{\varepsilon}} \frac{X_\ell}{p^\ell}$.
		\end{algorithmic}
	\end{algorithm}
	
	Note that the algorithm samples at most $16np=O(\varepsilon^{-2t\varepsilon}\cdot n^{1-\varepsilon})$ vertices and we maintained an AGM sketch on $G[S]$ and an AMS sketch for each sampled vertex, which imply that the space complexity of the algorithm is $O(\varepsilon^{-2t\varepsilon} n^{1-\varepsilon}\cdot\poly\log n)$. By simple calculus, for any $\varepsilon$, it holds that $\varepsilon^{-2\varepsilon}\le e^{2/e}<e$, so the space is at most  $\tilde{O}(e^tn^{1-\varepsilon})$. The post-processing time is near linear in the space, and the update time is $O(\poly\log n)$.
	
	Now we prove the correctness of the above algorithm. First we note that the expected number of sampled vertices in Step~(\ref{alg:numcc_sample_vertices}) is $np$, and thus by Markov inequality, the probability that more than $16 np$ vertices are sampled is at most $\frac{1}{16}$. Also note that with probability at least $1-\frac{1}{16}$, the AGM sketch returns a true spanning forest of $G[S]$. In addition, since the number of components in $F$ is at most $n$, we will query the AMS sketch at most $n$ times. Thus if we set the error probability of the AMS sketch to be $\frac{1}{16n}$ (with an extra $\log n$ factor in space), then with probability at least $1-\frac{1}{16}$, all invocations of AMS sketches for testing if $\edgesketch^C = \0$ will give the correct answer. Conditioned on this event, $X_\ell$ defined in Step~(\ref{alg:numcc_detect_ccs}) is exactly the number of connected components $B$ of size $\ell$ in $G$ such that all vertices in $B$ are sampled out, which is true since for any component $C\in F$, $F_2(\edgesketch^C)=\0$ if and only if $C$ is a connected component in $G$. 

	Let $B_1,\cdots,B_{\scc(G)}$ be the connected components of size at most $\frac{1}{\varepsilon}$ of $G$. 
	For any integer $\ell\le\frac{1}{\varepsilon}$, let $\BB_\ell$ denote the set of connected components of size $\ell$ in $G$, that is, $\BB_\ell=\{B_i: 1\leq i\leq \scc(G), \abs{B_i}=\ell\}$. Let $b_\ell:=|\BB_\ell|$. Note that $\scc(G)=\sum_{\ell\leq \frac{1}{\varepsilon}}b_\ell$. For any set $B$, let $Z_B$ denote the indicator random variable that all the vertices in $B$ have been sampled. Note that $\Pr[Z_B=1] = p^{\abs{B}}$. Now by the above argument, $X_\ell=\sum_{B\in\BB_\ell}Z_B$, and $\E[X_\ell] = b_\ell\cdot p^\ell$. Furthermore, we have
	$Y=\sum_{\ell\leq \frac{1}{\varepsilon}} \frac{X_\ell}{p^\ell}=\sum_{\ell\leq \frac{1}{\varepsilon}} \frac{\sum_{B\in\BB_\ell}Z_B}{p^\ell},$ and thus $\E[Y]=\sum_{\ell\leq \frac{1}{\varepsilon}}b_\ell =\scc(G)$.
	
	Note that all $Z_{B_i}$'s are mutually independent for all $i$, so it holds that
	\begin{eqnarray}
	\Var[Y]&=&\sum_{\ell\leq \frac{1}{\varepsilon}}\frac{\sum_{B\in\BB_\ell}\Var[Z_B]}{p^{2\ell}} \nonumber
	= \sum_{\ell\leq \frac{1}{\varepsilon}}\frac{b_\ell(p^{\ell} - p^{2\ell})}{p^{2\ell}} \nonumber
	\leq \sum_{\ell\leq \frac{1}{\varepsilon}}\frac{b_\ell}{p^{\ell}} \\
	&\leq& \frac{\sum_{\ell\leq \frac{1}{\varepsilon}}b_\ell}{p^{{1}/{\varepsilon}}} = \frac{\scc(G)}{p^{1/\varepsilon}} 
	\le \frac{n}{p^{{1}/{\varepsilon}}}= \varepsilon^{2t}n^2/16,\label{eqn:var}
	\end{eqnarray} 
	where we use the fact that $\scc(G)\leq n$, and  $p=(\varepsilon^{2t} n/16)^{-\varepsilon}$.
	Then by Chebyshev's inequality, 
	\begin{eqnarray*}
		\Pr[|Y-\scc(G)|\geq \varepsilon^t n]=\Pr[|Y-\E[Y]|\geq \varepsilon^t n] \le \frac{\Var[Y]}{\varepsilon^{2t} n^2} \leq 1/16.
	\end{eqnarray*}
	
	%
	%

	By the union bound, the algorithm will succeed with probability at least $\frac23$. 
	%
	%
	%
\end{proof}
\subsection{Approximating the weight of minimum spanning tree}
We use the previous algorithm on estimating the number of connected components to approximate the weight of a minimum spanning tree of a weighted graph. Let $W\geq 2$ be an integer, $G$ be a connected graph with integer edge weights from $[W]:=\{1,\cdots,W\}$, and  $c(\MST)$ be the weight of an MST of $G$. For any $1\leq \ell\leq W$, let $G^{(\ell)}$ denote the subgraph of $G$ consisting of all edges of weight at most $\ell$. Let $\cc^{(\ell)}$ denote the number of connected components of $G^{(\ell)}$. 
Chazelle et al.~\cite{CRT05:MST} give the following lemma relating the weight of MST to the number of connected components of $G^{(\ell)}$. 
\begin{lemma}[\cite{CRT05:MST}]\label{lem:MSTCC}
	It holds that $c(\MST)=n - W + \sum_{\ell=1}^{W-1}\cc^{(\ell)}$.
\end{lemma}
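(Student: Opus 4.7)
The plan is to use a standard double-counting identity that rewrites the total weight of the MST as a sum, indexed by a weight threshold $\ell$, of the number of MST edges whose weight strictly exceeds $\ell$. Concretely, since every edge has integer weight in $[W]$, an edge $e$ of weight $w(e)$ contributes to the indicator $\mathbf{1}[w(e) > \ell]$ for exactly $\ell \in \{0, 1, \ldots, w(e)-1\}$, so
\[
c(\MST) \;=\; \sum_{e \in \MST} w(e) \;=\; \sum_{\ell=0}^{W-1} \bigl|\{e \in \MST : w(e) > \ell\}\bigr|.
\]

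Next, I would identify the inner quantity with $\cc^{(\ell)}-1$. The key observation is the ``cut/cycle'' property of MSTs: the edges of the MST of weight at most $\ell$ form a maximum spanning forest of $G^{(\ell)}$, and hence contain exactly $n - \cc^{(\ell)}$ edges. (This follows from Kruskal's algorithm, or equivalently from the fact that any two spanning forests of $G^{(\ell)}$ have the same number of edges.) Since the full MST has $n-1$ edges (using that $G$ is connected), the number of MST edges of weight strictly greater than $\ell$ equals $(n-1)-(n-\cc^{(\ell)}) = \cc^{(\ell)} - 1$.

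Substituting back and using $\cc^{(0)} = n$ (the empty graph on $[n]$ has $n$ components) gives
\[
c(\MST) \;=\; \sum_{\ell=0}^{W-1}\bigl(\cc^{(\ell)} - 1\bigr) \;=\; n - W + \sum_{\ell=1}^{W-1} \cc^{(\ell)},
\]
as claimed. No step looks genuinely hard: the only thing to be careful about is justifying that the MST edges of weight $\le \ell$ really form a spanning forest of $G^{(\ell)}$, which is an easy consequence of the matroid exchange property (or of Kruskal's greedy algorithm), and that $G$ being connected implies the MST has $n-1$ edges so that the boundary term $W(n-1)$ comes out correctly.
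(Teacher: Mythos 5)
Your proof is correct, and it is the standard argument from Chazelle--Rubinfeld--Trevisan: the paper itself states this lemma without proof (citing \cite{CRT05:MST}), and your double-counting over weight thresholds combined with the fact that the MST edges of weight at most $\ell$ form a maximal spanning forest of $G^{(\ell)}$ is exactly the argument in that reference. No gaps; the two points you flag as needing care (the spanning-forest claim and the $n-1$ edge count from connectivity) are indeed the only places where justification is required, and both are handled correctly.
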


For a connected graph with integer edge weights, the weight of any MST is at least $n-1$, so it is sufficient to estimate $cc^{(\ell)}$ within an additive error of $\varepsilon n/(W-1)$ for each $\ell$. To do this, we can simply run $W-1$ parallel instances of Theorem \ref{thm:numcc}, each of which sketches a subgraph $G^{(\ell)}$. Then the space of the algorithm will be $\tilde{O}(Wn^{1-\frac{\varepsilon}{W-1}})$. 

\begin{theorem}\label{thm:upper_mst}
	Let $t\ge 1$ be any constant. There exists a single-pass dynamic streaming algorithm that uses space and post-processing time $O(e^tW n^{1-\frac{\varepsilon}{W-1}+\frac{\varepsilon^t}{(W-1)^t}}\poly\log n)$ to compute a $(1+\varepsilon)$-approximation of the weight of the MST.
\end{theorem}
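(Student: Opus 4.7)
The plan is to combine Lemma~\ref{lem:MSTCC} with a parallel application of the connected component estimator in Theorem~\ref{thm:numcc}, one per weight threshold. Concretely, I would maintain, for each $\ell \in \{1,\dots,W-1\}$, an independent instance of the sketch from Theorem~\ref{thm:numcc} tuned to estimate $\cc^{(\ell)}$, the number of connected components of the subgraph $G^{(\ell)}$ induced by edges of weight at most $\ell$. In the dynamic stream, when an update $(u,v,w)$ arrives with $w \in [W]$, I forward it to every instance indexed by $\ell \ge w$; linearity of the underlying sketches (AGM, AMS, and the vertex-sampling framework) ensures this cleanly produces sketches of the correct subgraphs. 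In the post-processing phase, each instance returns an estimate $\widehat{\cc}^{(\ell)}$ of $\cc^{(\ell)}$, and I output
\[
\widehat{c}(\MST) \;:=\; n - W + \sum_{\ell=1}^{W-1} \widehat{\cc}^{(\ell)}.
\]

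For the accuracy analysis, I would invoke Theorem~\ref{thm:numcc} with error parameter $\varepsilon' := \varepsilon/(W-1)$ and with the constant $q := t-1$, so each $\widehat{\cc}^{(\ell)}$ satisfies $|\widehat{\cc}^{(\ell)} - \cc^{(\ell)}| \le \varepsilon' n$ with constant probability. Since $G$ is connected and has integer weights, the true MST weight satisfies $c(\MST) \ge n-1$, so summing the $W-1$ component estimates via Lemma~\ref{lem:MSTCC} gives additive error at most $(W-1)\varepsilon' n = \varepsilon n$, which is a $(1+O(\varepsilon))$-approximation of $c(\MST)$; rescaling $\varepsilon$ by a constant absorbs this. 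To ensure all $W-1$ instances succeed simultaneously, I would boost each instance's success probability to $1 - 1/(3(W-1))$ by taking the median of $O(\log W)$ independent copies, a standard trick that only adds a $\poly\log n$ factor already hidden in the $\tilde{O}$.

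For the space accounting, each instance uses
\[
\tilde{O}\bigl(e^{q+1} n^{1-\varepsilon' + (\varepsilon')^{q+1}}\bigr) \;=\; \tilde{O}\bigl(e^{t}\, n^{1 - \frac{\varepsilon}{W-1} + \frac{\varepsilon^{t}}{(W-1)^{t}}}\bigr),
\]
and there are $W-1$ instances, yielding the claimed total space (and post-processing time) of $\tilde{O}\bigl(e^{t} W\, n^{1 - \varepsilon/(W-1) + \varepsilon^{t}/(W-1)^{t}}\bigr)$. The update time inherits the $O(\poly\log n)$ bound from Theorem~\ref{thm:numcc}, multiplied by $W-1$ for the forwarding step; this is still polylogarithmic whenever $W$ is.

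The main (and really only nontrivial) issue to verify is that the linear-sketch framework from Theorem~\ref{thm:numccsmall} composes correctly with the ``threshold routing'' described above: I must check that routing each weighted edge update into the instances $\ell \ge w$ genuinely produces the AGM and AMS sketches of $G^{(\ell)}$, not some mangled aggregate. This follows because all sketches involved are linear, the vertex sampling coins in each instance are drawn independently, and the per-instance neighborhood vectors $\edgesketch^v$ restricted to $G^{(\ell)}$ are exactly $\sum_{w \le \ell} \edgesketch^v_w$, which is precisely what the routing maintains. Everything else is a routine triangle inequality on the $W-1$ additive errors together with a union bound over the boosted failure probabilities.
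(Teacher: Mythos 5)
Your proposal is correct and follows essentially the same route as the paper: apply Lemma~\ref{lem:MSTCC}, run $W-1$ parallel instances of the connected-component estimator of Theorem~\ref{thm:numcc} with error parameter $\varepsilon/(W-1)$ (one per weight-threshold subgraph $G^{(\ell)}$), and use $c(\MST)\ge n-1$ to convert the $\varepsilon n$ additive error into a $(1+O(\varepsilon))$ multiplicative one. The threshold-routing and probability-boosting details you spell out are exactly what the paper leaves implicit, and your space accounting matches the stated bound.
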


We remark that Ahn et al.~\cite{AGM12:linear} have given a dynamic streaming algorithm for this problem for any graph with maximum edge weight upper bounded by $O(\poly(n))$, and their algorithm uses space $O(n\cdot\poly\log n)$. Our algorithm uses $o(n)$ space for any connected graph with maximum edge weight bounded by $o(\log n)$ (for constant $\varepsilon$), which improves the algorithm of~\cite{AGM12:linear} in this setting. We also note that $\Omega(n)$ space is necessary for estimating the weight of MST for graphs with maximum edge weight at least $c\log n$ for constant $\varepsilon$ and some large universal constant $c$ (see Theorem~\ref{thm:lower_mst}). 
Finally, we remark that the algorithm can also be extended to the setting where non-integral weights are allowed (see \cite{CRT05:MST} for more details).

\section{Dynamic streaming testers}
In this section, we give our streaming testers for a number of graph properties, including k-edge
connectivity, cycle-freeness, and planar graph bipartiteness. We present the testers for k-vertex
connectivity and Eulerianity in Appendix~\ref{app:otherProperties}.  

\subsection{Testing $k$-edge connectivity}
A graph is $k$-edge connected if the minimum cut of the graph has size at least $k$. We start from the simplest case, i.e., $k=1$, which is equivalent to the problem of testing connectivity.
\subsubsection{Connectivity}

It is clear that if $G$ is $\varepsilon$-far from being connected, one must add at least $\varepsilon m$ edges to make it connected, which implies that there are at least $\varepsilon m + 1$ connected components in $G$ \cite{GR02:property,PR02:diameter}. Therefore, we can also solve this by estimating the number connected components by setting the error parameter appropriately, however, by a more careful analysis, we can improve this by a factor of $O(n^{O(\varepsilon)})$.
\begin{theorem}\label{thm:connectivity}
	There exists a dynamic streaming tester for  $1$-edge connectivity that runs in $\tilde{O}(n^{1-\varepsilon})$ post-processing time and space.
\end{theorem}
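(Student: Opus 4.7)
The plan is a one-sided-error sampling-plus-sketching algorithm. I would sample each vertex of $V=[n]$ independently with probability $p = \Theta((\varepsilon n)^{-\varepsilon})$ to form $S$, and in parallel with the stream maintain (i) the AGM sketch of $G[S]$ from Theorem~\ref{thm:agm12_spanning_tree}, and (ii) for each $v \in S$, an AMS sketch of $\edgesketch^v$, amplified by $O(\log n)$ independent copies so that the test ``$\edgesketch^C = \0$?'' for any linear combination derived from these sketches is answered correctly with probability $1 - 1/\poly(n)$. In post-processing I recover a spanning forest $F$ of $G[S]$, and for each connected component $C$ of $F$ compute $\edgesketch^C = \sum_{v \in C} \edgesketch^v$ by linearity and test whether it equals $\0$. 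The tester rejects iff some such test returns zero; otherwise it accepts.

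The algorithm will have one-sided error. For soundness, for any component $C$ of $F$, the relation $\edgesketch^C = \0$ is equivalent to $C$ having no outgoing edges in $G$, which combined with $C$ being connected in $G[S] \subseteq G$ forces $C$ to be a full connected component of $G$ contained entirely in $S$. If $G$ is connected the only such component is $V$ itself, which cannot lie entirely in $S$ with overwhelming probability since $|S| \ll n$, so (modulo the $1/\poly(n)$ AMS error) the algorithm never rejects. For completeness, if $G$ is $\varepsilon$-far from connected then standard facts (see \cite{GR02:property,PR02:diameter}) imply $G$ has at least $\varepsilon m + 1$ connected components; maintaining $m$ on the side lets us reject immediately when $m < n-1$, so we may assume $c \geq \varepsilon m + 1 = \Omega(\varepsilon n)$ components $C_1, \dots, C_c$ with $\sum_i |C_i| = n$.

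The key calculation is the probability that some $C_i$ is entirely sampled. Because the $C_i$ are pairwise disjoint, the events $\{C_i \subseteq S\}$ are mutually independent, each of probability $p^{|C_i|}$. Since $t \mapsto p^t$ is convex on $(0,\infty)$ for $p \in (0,1)$, Jensen's inequality gives $\sum_i p^{|C_i|} \geq c \cdot p^{n/c} \geq \varepsilon n \cdot p^{1/\varepsilon}$, and plugging in $p = \Theta((\varepsilon n)^{-\varepsilon})$ makes this a sufficiently large constant, so the probability that some $C_i$ lies entirely in $S$ is at least $1 - \exp(-\sum_i p^{|C_i|}) \geq 2/3$, and any such $C_i$ is detected via $\edgesketch^{C_i} = \0$. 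The complexity bounds then follow: a Chernoff bound gives $|S| = \tilde{O}(np) = \tilde{O}(n^{1-\varepsilon})$; the AGM sketch on $|S|$ vertices uses $\tilde{O}(|S|)$ space and recovery time; the $|S|$ amplified AMS sketches contribute an additional $\tilde{O}(|S|)$; and post-processing is likewise $\tilde{O}(|S|)$ because the components of $F$ partition $S$. The $n^{O(\varepsilon)}$ improvement over a direct invocation of Theorem~\ref{thm:numcc} comes from needing only the \emph{existence} of one witness rather than an accurate count, which bypasses the extra $\varepsilon^{q+1}$ slack in the exponent; the one mildly delicate step is the Jensen bound on $\sum_i p^{|C_i|}$, which avoids the otherwise natural detour of pre-truncating to components of size $\leq O(1/\varepsilon)$ that would reintroduce that factor.
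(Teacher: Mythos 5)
Your proposal is correct and follows essentially the same route as the paper's proof: reject early when $m<n-1$, sample vertices with probability $p=\Theta((\varepsilon n)^{-\varepsilon})$, recover a spanning forest of $G[S]$ via the AGM sketch, and test each component $C$ for $\edgesketch^C=\0$ with amplified AMS sketches; your Jensen bound $\sum_i p^{|C_i|}\ge c\,p^{n/c}\ge \varepsilon n\,p^{1/\varepsilon}$ is the same inequality the paper obtains via AM--GM. The only cosmetic difference is that the paper controls $|S|$ by an explicit abort-if-too-large step (Markov) rather than a Chernoff bound, which does not affect the argument.
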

\begin{proof}
	First observe that one can simply reject the input graph if $m<n-1$, since in this case, the graph is disconnected. Thus, in the following we assume $m\ge n-1$ and our tester is described in {\bf Algorithm} \ref{alg:test_connectivity}. 
	
	\begin{algorithm}
		\caption{TestConnectivity}
		\label{alg:test_connectivity}
		\begin{algorithmic}[1]
			\State
			Sample each vertex with probability $p:=( \varepsilon n/10)^{-\varepsilon}$. If more than $16np$ vertices are sampled, abort and output {\bf Fail}. Let $S$ denote the set of sampled vertices.
			\State For each $v\in S$, maintain an AMS sketch $AMS(\edgesketch^v)$, sketching the neighborhood of $v$ in $G$.
			\State Maintain an AGM sketch of $G[S]$ using Theorem~\ref{thm:agm12_spanning_tree}. 
			\State {\bf Post-Processing:} 
			\State Use the above sketch to construct a spanning forest $F$ of  $G[S]$ as guaranteed by Theorem~\ref{thm:agm12_spanning_tree}.
			
			\State For each connected component $C\in F$, estimate $F_2(\edgesketch^C)$ using the AMS sketch  $AMS(\edgesketch^C)=\sum_{v\in C}AMS(\edgesketch^v)$. If the answer $\tilde{F_2}=0$, {\bf Reject}.
			\State {\bf Accept}.
		\end{algorithmic}
	\end{algorithm}
	
	It is easy to see that {\bf Algorithm} \ref{alg:test_connectivity} only use $\tilde{O}(|S|)$ space, which is bounded by $\tilde{O}(np)=\tilde{O}(\varepsilon ^{-\varepsilon} n^{1-\varepsilon})=\tilde{O}(n^{1-\varepsilon})$. 
	The post-processing time is nearly linear in the size of $S$, since the AGM algorithm needs $\tilde{O}(|S|)$ post-processing time, and we invoke at most $|S|$ AMS queries, each of which takes $\tilde{O}(1)$ time. The update time is $\poly\log n$.

	For the correctness of the algorithm, we condition on the event that the number of sampled vertices is at most $16np$, which occurs with probability at least $1-\frac{1}{16}$, and on the event that the spanning forest $F$ is constructed correctly, which occurs with probability $0.99$. By setting the error probability of the AMS sketch to be $1/n^2$ (with an extra $\log n$ factor in space), with probability $0.99$, all the answers from AMS sketches are all correct, and we also condition on this. 
	
	If $G$ is connected, then it will always be accepted, since for each $C\in F$, $\edgesketch^C\neq \0$, and conditioned on the correctness of the AMS sketch, $\tilde{F_2}$ will never be $0$. 
	On the other hand, if the graph is $\varepsilon$-far from being connected, the number of connected components in $G$, denoted as $\cc(G)$, is at least $1+\varepsilon m \ge \varepsilon n$. 
	Let $B_1,\cdots,B_{\cc(G)}$ denote all connected components in $G$. Let $p_i=p^{|B_i|}$ for $1\le i\le \cc(G)$. Using the inequality $1-x\le e^{-x}$ for all $x$, the probability that none of the components is entirely sampled out is $(1-p_1)\cdot (1-p_2)\cdots \cdot (1-p_{\cc(G)})\le  e^{-\sum_i p_i}$. Then by the AM-GM inequality, this probability is at most
	\begin{eqnarray}
	e^{-\cc(G)\cdot (\prod_i p_i)^{1/\cc(G)}}=e^{-\cc(G)\cdot p^{n/\cc(G)}}\le e^{-\cc(G)\cdot p^{1/\varepsilon}}\le e^{-\varepsilon n \cdot p^{1/\varepsilon}}\le 1/16, \nonumber
	\end{eqnarray}
	where we use the fact that $p=(\varepsilon n/10)^{-\varepsilon}$ and $\cc(G)\geq \varepsilon n$. So the probability that at least one of the components is sampled out is at least $15/16$. Conditioned on this, $F_2(\edgesketch^C)=0$ for some component in $G[S]$ and the algorithm will output {\bf Reject}. By union bound, our algorithm will succeed with probability $1-\frac{1}{16}-0.01-0.01-\frac{1}{16}>3/4$.
	%
	%
	%
\end{proof}
\subsubsection{$k$-edge connectivity: $k\geq 2$}~\label{sec:k-edge-connectivity}
By using a slightly more involved argument and replacing AMS sketches with $(k-1)$-sparse recovery sketches, we can generalize the above idea to testing $k$-edge connectivity for $k\geq 2$. We have the following theorem on testing $k$-edge connectivity. The proof is deferred in Appendix~\ref{app:edge-conn}. 
\begin{theorem}~\label{thm:kedgeconn}
	Let $k\le O(n^{\varepsilon/(1+\varepsilon)})$. There exists a single-pass dynamic streaming tester for $k$-edge connectivity with post-processing time and space ${O}(k^{1+\varepsilon}\cdot n^{1-\varepsilon}\cdot \poly\log n)$.
\end{theorem}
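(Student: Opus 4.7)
I plan to mirror the connectivity tester of Algorithm~\ref{alg:test_connectivity} and replace the AMS sketch, which only detects whether a cut is empty, by the $(k-1)$-sparse recovery sketch of Lemma~\ref{lem:ksparse} applied to the vectors $\edgesketch^v$: this exactly recovers $\edgesketch^C=\sum_{v\in C}\edgesketch^v$, and hence the crossing edges of $C$, whenever $|E(C,V\setminus C)|\le k-1$, which is the right analogue of ``$\edgesketch^C=\mathbf{0}$'' when one tests $k$-edge connectivity instead of $1$-edge connectivity.

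\textbf{Algorithm and resources.} Include each vertex in $S$ independently with probability $p=\Theta((k/(\eps n))^{\eps})$ and abort if $|S|>16np$. Maintain (i)~the AGM sketch of $G[S]$ from Theorem~\ref{thm:agm12_spanning_tree} and (ii)~for every $v\in S$ a $(k-1)$-sparse recovery sketch of $\edgesketch^v$ with failure probability $1/n^3$ (Lemma~\ref{lem:ksparse}). In post-processing, recover a spanning forest $F$ of $G[S]$, and for every component $C\in F$ form $\edgesketch^C$ by summing the stored sparse-recovery sketches and run the recovery routine; output \textbf{Reject} if it succeeds (equivalently $|E(C,V\setminus C)|\le k-1$) for some $C\in F$, otherwise \textbf{Accept}. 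With the target $p$, $np=O((k/\eps)^\eps n^{1-\eps})$; each sparse-recovery sketch uses $\tilde{O}(k)$ bits, so the total space (and post-processing time) is $\tilde{O}(knp)=\tilde{O}(k^{1+\eps}n^{1-\eps})$, and updates cost $\poly\log n$.

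\textbf{Structural ingredient and soundness.} Soundness needs a generalization of the bound $\cc(G)\ge\eps n+1$ used for $k=1$: \emph{if $G$ is $\eps$-far from $k$-edge connected, there exist pairwise vertex-disjoint $C_1,\dots,C_t\subset V$ with $t=\Omega(\eps n/k)$, $|C_i|\le 1/\eps$ and $|E(C_i,V\setminus C_i)|\le k-1$.} I plan to prove it by noting that an $\eps$-far $G$ needs $\ge\eps m/k\ge\eps n/k$ extra edges to become $k$-edge connected, then using a pigeonhole / cactus-of-small-cuts argument to extract disjoint small-side witnesses of cuts of value $<k$. Granted the lemma, the probability of the good event $\{C_i\subseteq S\}\cap\{\Gamma(C_i)\cap S=\emptyset\}$ is at least $p^{|C_i|}(1-p)^{|\Gamma(C_i)|}\ge p^{1/\eps}(1-p)^{k-1}$. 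The assumption $k\le O(n^{\eps/(1+\eps)})$ gives $kp=O(k^{1+\eps}/(\eps n)^\eps)=O(1)$, so the isolation factor is $\Omega(1)$; together with $p^{1/\eps}=\Theta(k/(\eps n))$ the expected number of good witnesses is $\Omega\!\left(\tfrac{\eps n}{k}\cdot\tfrac{k}{\eps n}\right)=\Omega(1)$, where the hidden constant in $p$ is set to make this a large constant (say $10$) exactly as in Theorem~\ref{thm:connectivity}. After passing to an $\Omega(1)$-fraction of witnesses whose neighbourhoods $\Gamma(C_i)$ are pairwise disjoint, the good events are independent and $\prod_i(1-\text{good prob}_i)\le e^{-10}\le 1/16$. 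When $C_i$ is good it is exactly an isolated component of $G[S]$ with cut $\le k-1$, so the sparse recovery on $\edgesketch^{C_i}$ succeeds and we reject. Completeness is immediate: $k$-edge connectivity forces every non-trivial cut to have size $\ge k$, so the recovery routine fails on every $\edgesketch^C$. A union bound over the Markov step, the $0.01$ AGM failure, and the $\le|S|/n^3$ cumulative sparse-recovery failures finishes the argument with success probability $>2/3$.

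\textbf{Main obstacle.} The main technical difficulty is the structural lemma in exactly the form $t=\Omega(\eps n/k)$ with $|C_i|\le 1/\eps$: weaker quantitative variants (for instance $|C_i|\le k/\eps$) would force a smaller $p$ and blow the space bound past the target $\tilde{O}(k^{1+\eps}n^{1-\eps})$, and the constraint $k\le O(n^{\eps/(1+\eps)})$ is precisely what is needed to keep $kp=O(1)$ for the isolation event. Extracting a large sub-family of witnesses whose neighbourhoods are also disjoint (to obtain independence and make the AM--GM / Chebyshev step clean) is a subsidiary but important technical point.
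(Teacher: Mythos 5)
Your algorithm is exactly the paper's \textbf{Algorithm}~\ref{alg:test_kedgeconnectivity} (vertex sampling, AGM sketch on $G[S]$, a $(k-1)$-sparse recovery sketch of $\edgesketch^v$ per sampled vertex, reject when recovery succeeds on some component), and the completeness and resource accounting match. The soundness analysis, however, has a genuine quantitative gap centered on the structural lemma you defer. You posit $t=\Omega(\eps n/k)$ disjoint witnesses; the paper first rejects whenever $m<nk/2$ and then invokes the Orenstein--Ron bound of $\frac{2\eps m}{k}\ge \eps n$ disjoint small-cut subsets (Lemma~\ref{lem:eps_far_edge_connectivity}) -- a factor of $k$ more than you claim, and that factor is not slack. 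To make the isolation events independent one can only pass to a $1/k$-fraction of the witnesses, not the ``$\Omega(1)$-fraction'' you assert: each witness has a cut of at most $k-1$ edges and hence at most $k-1$ neighboring witnesses, so a greedy independent set in the conflict graph has size $s/k$ and nothing better is guaranteed in general (this is Claim~\ref{lem:independentSet}). Starting from $\Omega(\eps n/k)$ witnesses and losing another factor $k$ leaves an expected $\Omega(1/k)$ good independent witnesses, and your $\prod_i(1-\mathrm{good\ prob}_i)\le e^{-10}$ step no longer goes through. With the paper's $\eps n$ witnesses and $p=(\eps n/4k)^{-\eps}$ the product of the $\eps n/k$ independent survival probabilities is at most $e^{-4}$, which is what closes the argument.

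Two further points. First, for the test to fire you need the sampled witness $C_i$ to appear as a \emph{single} component of $G[S]$; a generic small side of a cut of value $<k$ need not induce a connected subgraph, in which case no component of $F$ equals $C_i$ and no sparse-recovery query sees the whole cut. The paper handles this by refining each witness to a minimal ($\ell$-extreme) subset and proving that such subsets induce connected subgraphs; your sketch omits this. Second, your insistence that every witness satisfy $|C_i|\le 1/\eps$ is both slightly too strong (disjointness only yields that the \emph{average} size is at most $1/\eps$) and unnecessary: the AM--GM step you already invoke only needs the average, exactly as in the paper. None of these issues changes the algorithm, but as written the soundness proof does not close without the stronger witness count (hence the $m\ge nk/2$ reduction) and the connectivity refinement of the witnesses.
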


\subsection{Testing cycle-freeness}\label{sec:cyclefree}
Now we consider the problem of testing cycle-freeness, which is equivalent to testing if the graph is a forest. Let $\cc(G)$ denote the number of connected components of the input graph $G$. Let $B_1,\cdots, B_{\cc(G)}$ be the connected components in $G$. Note that if $G$ is cycle-free, then for each $i\leq \cc(G)$, $|E(B_i)|=|B_i|-1$, and thus the total number of edges in $G$ is 
\begin{eqnarray*}
	m= \sum_{i=1}^{\cc(G)}|E(B_i)|=\sum_{i=1}^{\cc(G)}(|B_i|-1)=n-\cc(G),
\end{eqnarray*} 
that is, $\cc(G)=n-m$. If $G$ is $\varepsilon$-far from being cycle-free, i.e.,  one has to delete more than $\varepsilon m$ edges to make it cycle-free, then $\cc(G)> n-m+\varepsilon m$. Therefore, to test cycle-freeness of a graph, it will be sufficient to approximate the number of connected components with additive error $\varepsilon m/2$. One may try to directly invoke {\bf Algorithm} \ref{alg:number_connected_components} with parameter $\varepsilon'=\frac{\varepsilon m}{2n}$. However, $m$ could be much smaller than $n$ and we do not know $m$ in advance. We overcome this obstacle by a case analysis. 
%
%
%

\begin{theorem}
	There exists a single-pass dynamic streaming algorithm that tests cycle-freeness of a graph with space and post-processing time $O(n^{1-\varepsilon+\varepsilon^2}\cdot\poly\log n)$.
\end{theorem}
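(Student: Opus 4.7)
The plan is to reduce testing cycle-freeness to estimating $\cc(G)$ within additive error $\varepsilon m/2$, using the identities established just above: $\cc(G) = n - m$ if $G$ is cycle-free, and $\cc(G) > n - m + \varepsilon m$ if $G$ is $\varepsilon$-far. The count $m$ is obtained exactly from the dynamic stream by tracking insertions minus deletions, so it is known at post-processing.

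The direct approach would apply \texttt{EstimateNumCC} (Theorem~\ref{thm:numcc}) with parameter $\varepsilon' = \varepsilon m/(2n)$ to obtain additive error $\varepsilon m/2$, but as the paper notes this fails because $m$ is not known at stream-start and, for small $m$, the implied $\varepsilon'$ shrinks so much that the sampling-based space approaches $n$. I plan to handle both issues with a post-processing case analysis on $m$. First, if $m > n - 1$, reject immediately, since every cycle-free graph has at most $n - 1$ edges. Otherwise, fix a threshold $T = \Theta(n^{1-\varepsilon+\varepsilon^2})$. In the sparse regime $m \le T$, maintain a $T$-sparse recovery sketch of the edge indicator vector (Lemma~\ref{lem:ksparse}) in $\tilde O(T)$ space; the sketch recovers the entire edge set (w.h.p.), after which cycle-freeness is decided deterministically by a union-find pass. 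In the dense regime $T < m \le n - 1$, maintain in parallel a logarithmic family of \texttt{EstimateNumCC} sketches (Algorithm~\ref{alg:number_connected_components}) with geometrically-spaced parameters $\varepsilon'$ ranging over $[\varepsilon T/(2n),\, \varepsilon]$, and at post-processing pick the instance whose $\varepsilon' n$ is at most $\varepsilon m/2$. The final test compares the selected estimate of $\cc(G)$ against the cutoff $n - m + \varepsilon m/2$: accept below the cutoff, reject otherwise.

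Correctness follows by a union bound over the sparse-recovery guarantee and the \texttt{EstimateNumCC} guarantee, together with the $\varepsilon m$ gap between cycle-free and $\varepsilon$-far instances. The dominant space cost is $\tilde O(T)$ in the sparse branch and the space of the smallest-parameter invocation in the dense branch.

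The main obstacle I expect is the joint calibration of $T$ and the smallest \texttt{EstimateNumCC} parameter so that both branches fit into $\tilde O(n^{1-\varepsilon+\varepsilon^2})$. This balance is delicate, since as $\varepsilon'$ in the dense branch shrinks, the sampling variance grows; matching the resulting exponent to $1 - \varepsilon + \varepsilon^2$ requires carefully choosing $T$ and exploiting the constraint $m \le n - 1$, so that the regime of interest coincides with the $q = 1$ instantiation of Theorem~\ref{thm:numcc} that produces precisely the exponent $1 - \varepsilon + \varepsilon^2$.
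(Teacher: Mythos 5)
Your overall architecture --- reject when $m>n-1$, recover the graph exactly by sparse recovery when $m$ is small, and fall back to component-count estimation when $m$ is large --- matches the paper's, and your sparse branch (a $T$-sparse recovery of the edge indicator vector via Lemma~\ref{lem:ksparse} followed by union-find) is a perfectly valid, arguably simpler, substitute for the paper's composition of a sparse-recovery sketch with the per-vertex AGM sketches.

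The gap is in the dense branch, and it is not merely a ``delicate calibration'': the black-box reduction to Theorem~\ref{thm:numcc} cannot be made to work for \emph{any} choice of $T$. To guarantee additive error $\varepsilon m/2$ from Theorem~\ref{thm:numcc} you must run it with $\varepsilon'\le \varepsilon m/(2n)$, and in the worst case of your dense regime $m=\Theta(T)=\Theta(n^{1-\varepsilon+\varepsilon^2})$, so $\varepsilon'=\Theta(n^{-(\varepsilon-\varepsilon^2)})=o(1/\log n)$. The space of that invocation is $\tilde{\Omega}(n^{1-\varepsilon'})=\tilde{\Omega}(n\cdot e^{-\varepsilon'\ln n})=\tilde{\Omega}(n)$ (indeed the sampling probability $p=(\varepsilon'^{2t}n/16)^{-\varepsilon'}$ tends to $1$, so essentially every vertex is sampled). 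Conversely, forcing $n^{1-\varepsilon'}=O(n^{1-c})$ for a constant $c>0$ requires $\varepsilon'=\Omega(1)$, i.e.\ $T=\Omega(n)$, which destroys the sparse branch. The logarithmic family of instances does not help, because the smallest-$\varepsilon'$ instance must be maintained throughout the stream regardless of which one you consult at the end.

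The missing idea is to keep the sampling probability at $p=n^{-\Theta(\varepsilon)}$ and instead open up the analysis of Algorithm~\ref{alg:number_connected_components} so that its error scales with $m$ rather than with $n$. The paper modifies the estimator to discard isolated vertices (set $X_C=0$ for $|C|=1$), so that it estimates the number of small components of the subgraph $G'$ on the $n'$ non-isolated vertices; the variance bound~(\ref{eqn:var}) then becomes $O(n'\cdot p^{-1/\eta})$ in place of $O(n\cdot p^{-1/\eta})$. Since in the dense branch $n'>n^{1-\eta}$ and $m\ge n'/2$, taking $p^{-1/\eta}=\Theta(n^{1-\eta})$ gives standard deviation $O(\sqrt{n'\cdot n^{1-\eta}})=O(n')=O(m)$, which with the right constants is at most $\varepsilon^3 m/8$ --- i.e.\ a small fraction of the $\varepsilon m$ gap --- while the space stays at $\tilde{O}(np)=\tilde{O}(n^{1-\eta+\eta^2})$. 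One must then separately charge the components of $G'$ larger than the size cutoff $1/\eta$ (which the black-box $\varepsilon' n$ guarantee would have absorbed, but the refined analysis does not): their number $L$ satisfies $m\ge L(1/\eta-1)$, and the choice $\eta=\varepsilon/(1+\varepsilon+\varepsilon^2)$ is exactly what makes $L$ plus the sampling error fit under $\varepsilon m$. Without this refined, $m$-relative variance argument your dense branch does not meet the stated space bound.
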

\begin{proof}
	Note that if $m>n-1$, then the graph must contain at least one cycle, and thus we can safely reject the graph. In the following, we assume that $m\le n-1$.
	Our algorithm for testing cycle-freeness depends on the construction of AGM sketch, in which each vertex $u$ maintains a linear sketch of $\edgesketch^{u}$ (denoted as $\sketch(\edgesketch^u)$). Each such sketch has size $\poly\log n$ and the property that $\sketch(\0)=\0$ (it consists of $O(\log n)$ $l_0$-samplers, see \cite{AGM12:linear} for details). Our main idea is to maintain a sparse recovery sketch for the AGM sketch (i.e. a composition of sparse recovery sketch and AGM sketch). Now we describe our algorithm as follows.
	\begin{algorithm}
		\caption{TestCycleFreeness}
		\label{alg:test_cyclefree}
		\begin{algorithmic}[1]
			\State
			Maintain a count $\lambda$ of the number of edges. 
			\State\label{alg:cycle_small} Let  $\eta={\varepsilon}/{(1+\varepsilon+\varepsilon^2)}$. Let $k=n^{1-\eta}\poly\log n$. Maintain an exact $k$-sparse recovery sketch $\mathcal{S}$ of the vector $\sparsesketch:=(\sketch(\edgesketch^u))_{u\in V}$ using Lemma \ref{lem:ksparse}. 
			\State \label{step:largem} Run {\bf Algorithm} \ref{alg:number_connected_components} with parameter $p=(2^{2t}n^{1-\eta}/16)^{-\eta}$, while in step $(\ref{alg:numcc_detect_ccs})$ of {\bf Algorithm} \ref{alg:number_connected_components}, ignore all the isolated vertices that are sampled out (i.e., set $X_C=0$ whenever $|C|=1$).
			\State {\bf Post-Processing:} 
			\State Recover $\sparsesketch$ from $\mathcal{S}$. \If{ The recovery does not fail} 
			\State  Use $\sparsesketch$ to construct a spanning forest on vertex set $Y:=\{u:\sketch(\edgesketch^u)\neq \0\}$ using Theorem~\ref{thm:agm12_spanning_tree}. Let $\tilde{c}_1$ denote the number of connected components of this forest. If $\tilde{c}_1= |Y|-\lambda$, \textbf{Accept}; otherwise, \textbf{Reject}.
			\Else\label{alg:cycle_large} 
			\State Let $\tilde{c}_2$ be the resulting estimator of {\bf Algorithm} \ref{alg:number_connected_components} in Step~\ref{step:largem}. If $\tilde{c}_2\le n-(1-\varepsilon -\frac{\varepsilon^3}{4})\lambda$, {\bf Accept}; otherwise, {\bf Reject}.
			\EndIf
		\end{algorithmic}
	\end{algorithm}
	
	Note that the space used by the algorithm is $\max\{\tilde{O}(np), k\cdot\poly\log n\} = \tilde{O}(n^{1-\varepsilon+\varepsilon^2})$, and the post-processing time is near linear in space.
	
	Now we prove the correctness of the algorithm. We define $G'\subseteq G$ to be a subgraph which consists of all the vertices of positive degree. Let $n'=|G'|$. Note that $m\ge n'/2$. 
	
	If $n'\le n^{1-\eta}$, then the vector $\sparsesketch$ is $\tilde{O}(n^{1-\eta})$-sparse, since for all isolated vertices $u$, we have $\sketch(\edgesketch^u)=\0$, and thus we can recover the entire $\sparsesketch$ exactly. Then by Step (\ref{alg:cycle_small}) and Theorem \ref{thm:agm12_spanning_tree}, we can get the exact number of components of $G'$. Since the number of vertices of $G'$ is $|Y|$, and $\lambda = m$ is the total number of edges, then the graph is cycle-free if and only if $\tilde{c}_1= |Y|-\lambda$. 
	
	If $n'>n^{1-\eta}$, then by Theorem \ref{thm:numccsmall}, $\tilde{c}_2$ is an estimator for the number of components in $G'$ of size smaller than $1/\eta$ with additive error $\eta^t\sqrt{n'n^{1-\eta}}$. This follows by the upper bound $\eta^{2t}n^{1-\eta} n'/16$ of the variance of the estimator (which can be shown similarly to inequality (\ref{eqn:var}) in Section \ref{sec:cc}) and the Chebyshev's inequality. Now note that the additive error is at most $\eta^t n'\le \varepsilon^3 m/8$ for some constant $t$ since $n'>n^{1-\eta}$ and $m\ge n'/2$. Let $L$ be the number of components in $G'$ of size larger than $1/\eta$, then $-\varepsilon^3 m/8 \le \cc(G')-\tilde{c}_2\le L+\varepsilon^3 m/8$ holds with high probability. Conditioned on this, Step~(\ref{alg:cycle_large}) outputs the correct answer if $L+\varepsilon^3 m/8+\varepsilon^3 m/8 = L+\varepsilon^3 m/4 < \varepsilon m$. Now if $L< \varepsilon m/4$, we are done. If $L \ge \varepsilon m/4$, then by our choice of $\eta$ and the fact that $m\ge L\cdot(1/\eta -1)$, $\varepsilon m\ge L+L\varepsilon^2\ge L+\varepsilon^3 m/4$. This completes the proof of the theorem.
	%
	%
	%
\end{proof}

\subsection{Testing bipartiteness of the planar graphs}\label{app:testingBipartite} 
Now we consider the problem of testing if a planar graph is bipartite or $\varepsilon$-far from  bipartite. Here a planar graph is $\varepsilon$-far from bipartite if one has to delete at least $\varepsilon m$ edges to get a bipartite graph.
Czumaj et al.~\cite{CMKS11:planar} showed the following result\footnote{In~\cite{CMKS11:planar}, $\varepsilon$-far is expressed as $\varepsilon n$ edges, rather than $\varepsilon m$ edges as in our definition, that has to be deleted to obtain a bipartite graph. However, Lemma~\ref{lemma:planar_bipartite} directly follows from their proof.}. 
\begin{lemma}[\cite{CMKS11:planar}]~\label{lemma:planar_bipartite}
	For any (simple) planar graph $G$ that is $\varepsilon$-far from bipartite, then $G$ has at least $\varepsilon m/q(\varepsilon)$ edge-disjoint odd-length cycles of length at most $q(\varepsilon)/2$ each, where $q(\varepsilon)=O(1/\varepsilon^2)$.
\end{lemma}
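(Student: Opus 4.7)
The plan is to prove the lemma by a greedy extraction argument paired with a structural result specific to planar graphs. Set $L:=q(\varepsilon)/2$ and $G_0:=G$. Iteratively, at step $i$, if $G_{i-1}$ contains an odd cycle of length at most $L$, pick one (call it $C_i$) and delete all of its edges to form $G_i$; otherwise stop. Let $k$ be the total number of cycles extracted; by construction the cycles $C_1,\dots,C_k$ are pairwise edge-disjoint and each has length at most $q(\varepsilon)/2$. The goal reduces to showing $k \ge \varepsilon m/q(\varepsilon)$.

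Suppose for contradiction that $k<\varepsilon m/q(\varepsilon)$. Then the total number of edges deleted during the extraction is at most $k\cdot L < \varepsilon m/2$. The remaining graph $G_k$ contains no odd cycle of length $\le L$, i.e.\ it has odd-girth greater than $L$. The key structural claim is: any planar graph $H$ of odd-girth greater than $L=\Theta(1/\varepsilon^2)$ can be made bipartite by deleting at most $\varepsilon m/2$ of its edges. Combining the two deletion budgets, $G$ could be made bipartite by removing fewer than $\varepsilon m$ edges in total, contradicting the assumption that $G$ is $\varepsilon$-far from being bipartite.

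The main technical obstacle is the planar structural claim. To prove it, I would build a BFS spanning forest of $H$ and $2$-color every vertex by the parity of its BFS layer. Non-tree edges then split into \emph{good} ones (joining different parities, consistent with the coloring) and \emph{bad} ones (joining the same parity, each of which yields an odd cycle by closing the two tree-paths to the lowest common ancestor). Since $H$ has odd-girth greater than $L$, every bad non-tree edge must produce a cycle of length greater than $L$, which forces its endpoints to lie far from their common ancestor in the BFS tree. Planarity --- through a careful Euler-formula/face-charging argument applied to the plane embedding restricted to the tree together with the bad non-tree edges --- then bounds the number of bad edges by $O(n/L)\le O(m/L)$. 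Deleting all such edges turns the BFS parity coloring into a proper $2$-coloring, rendering $H$ bipartite. Choosing $L=\Theta(1/\varepsilon^2)$ so that $O(m/L)\le \varepsilon m/2$ both establishes the claim and explains the $1/\varepsilon^2$ dependence in $q(\varepsilon)$.

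Putting the pieces together yields the desired $k\ge \varepsilon m/q(\varepsilon)$ edge-disjoint odd cycles of length at most $q(\varepsilon)/2$, as stated. The delicate point is the planarity-based counting inside the structural claim: any looseness in the bound on the number of bad non-tree edges propagates directly into the exponent of $1/\varepsilon$ in $q(\varepsilon)$, so one must carefully align the extraction budget $kL<\varepsilon m/2$ with the slack in the structural deletion budget $O(m/L)\le \varepsilon m/2$ to obtain the stated $O(1/\varepsilon^2)$ bound.
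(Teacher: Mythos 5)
You are reconstructing a lemma the paper itself does not prove: it is imported from Czumaj et al.~\cite{CMKS11:planar} (modulo the $\varepsilon n$ versus $\varepsilon m$ normalization discussed in the footnote), so your argument has to stand on its own. The outer skeleton --- greedily extract short odd cycles, and if too few are found, argue that the residual planar graph of large odd girth is close to bipartite, contradicting farness --- is a perfectly legitimate contrapositive of the standard argument. The gap is in your proof of the structural claim. You propose to delete all \emph{bad} (intra-layer) BFS edges and to bound their number by $O(n/L)$ via planarity. That count is false. Take a root $r$ with two disjoint paths $r,u_1,\dots,u_D$ and $r,v_1,\dots,v_D$, and add the rungs $(u_i,v_i)$ for all $i>L/2$. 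This graph is planar; every cycle avoiding $r$ crosses the $u$/$v$ cut an even number of times and is even, while every cycle through $r$ has length $2i+1>L$, so the odd girth exceeds $L$; and the BFS tree rooted at $r$ is the pair of paths, so \emph{every} rung is an intra-layer bad edge whose fundamental cycle is long (LCA at the root). That is $\Theta(n)=\Theta(m)$ bad edges, not $O(m/L)$, even though this graph is at distance $1$ from bipartite (delete $ru_1$). The reason no Euler-formula/face-charging argument can rescue the count is visible in the example: two bad edges can bound a $4$-face of the subgraph (tree $\cup$ bad edges) even when each of their fundamental cycles is long, so the faces of that subgraph need not be long.

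The statement you actually invoke --- planar with odd girth $>L$ implies distance $O(m/L)$ to bipartiteness --- happens to be true, but it is a genuinely deep theorem (it follows from results on homomorphisms of high-odd-girth planar graphs into odd cycles, in the spirit of Klostermeyer--Zhang), and it would yield $q(\varepsilon)=O(1/\varepsilon)$, which should have made you suspicious of the stated $O(1/\varepsilon^2)$. The proof underlying \cite{CMKS11:planar} instead goes through the planar separator theorem (hyperfiniteness of planar graphs): one can delete at most $(\varepsilon/2)m$ edges so that every surviving connected component has $O(1/\varepsilon^{2})$ vertices; the residual graph is still $(\varepsilon/2)m$-far from bipartite; distance to bipartiteness is additive over components, and every odd cycle inside a component has length at most the component size; hence a greedy extraction inside each non-bipartite component yields at least $(\varepsilon m/2)/O(1/\varepsilon^{2})$ edge-disjoint odd cycles of length $O(1/\varepsilon^{2})$. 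The quadratic dependence in $q(\varepsilon)$ comes precisely from the $O(1/\delta^{2})$ component size in the separator decomposition --- i.e., from a structural bound of the shape $O(m/\sqrt{L})$ rather than the $O(m/L)$ you assert --- which is the alignment of budgets your last paragraph gestures at but does not actually carry out.
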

By the above lemma, we only need to sample each \emph{edge} independently with some probability (rather than vertices as we did before) of the graph so that with high probability the resulting sampled graph contains at least one short odd-length cycle. The edge-sampling process can be done by using hash functions (see e.g.~\cite{AGM12:sketch}). Similar to our previous analysis, it will be sufficient to set the sample probability to $p=O_\varepsilon( n^{-q(\varepsilon)})$, which implies that the space used is $O(n^{1-\Omega(\varepsilon^2)}\poly\log n)$. We omit the details here. 

\section{Lower bounds}
In this section we present lower bounds, which hold in the insertion-only model. Our proofs are based on the reductions to the \emph{Boolean Hidden Hypermatching (BHH)} problem (See \cite{VY11:cycle}), which are in the same spirit as the lower bound proof for the \emph{Cycle Counting} problem in \cite{VY11:cycle}. We first give the definition of the boolean hidden hypermatching problem.
\begin{definition}[$\BHH_n^t$]
	In the this problem, Alice gets a boolean vector $x\in\{0,1\}^n$, where $n=2kt$ for some integer $k$. Bob gets a partition (or hypermatching) of the set $[n]$, $\{m_1,\cdots,m_{n/t}\}$, where the size of each $m_i$ is $t$, and a vector $w\in\{0,1\}^{n/t}$. For convenience, we will also use the corresponding $n$-dimensional boolean indicator vector $M_i$ to represent $m_i$, and let $M$ be a $n/t\times n$ matrix, the $i$ row of which is $M_i$. The promise of the input is either $Mx+w=\textbf{1}$ or $Mx+w=\textbf{0}$, where all the operations are modulo $2$. The goal of the problem is to output $1$ when $Mx+w=\textbf{1}$, and output $0$ otherwise.
\end{definition} 

We have the following lower bound from \cite{VY11:cycle}.
\begin{theorem}[\cite{VY11:cycle}]
	The randomized one-way communication complexity of $\BHH_n^t$ when $n=2kt$ for some integer $k\ge 1$ is $\Omega(n^{1-1/t})$.
\end{theorem}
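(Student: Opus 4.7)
The plan is to prove the lower bound via Yao's minimax principle combined with a Fourier-analytic argument on the Boolean cube, generalizing the proof of the $t=2$ case (Boolean Hidden Matching) by Gavinsky et al. First I would fix a hard input distribution: $x$ uniform on $\{0,1\}^n$, the perfect $t$-hypermatching $M$ uniform over all such matchings of $[n]$, and an answer bit $b\in\{0,1\}$ uniform, with $w := Mx + b\cdot\1$. By Yao's principle it then suffices to show that any deterministic one-way protocol in which Alice sends $c$ bits correctly recovers $b$ under this distribution with probability at most $1/2+o(1)$ whenever $c = o(n^{1-1/t})$.

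Next I would reduce Bob's distinguishing advantage to a Fourier-analytic quantity. Fix Alice's encoding $a:\{0,1\}^n\to\{0,1\}^c$, let $S_\alpha = a^{-1}(\alpha)$, and let $\mu_\alpha$ be uniform on $S_\alpha$. After receiving $\alpha$ and $(M,w)$, Bob's optimal advantage over random guessing equals $\tfrac{1}{2}\,\E_M\bigl\|\mathrm{Law}(Mx\mid x\sim\mu_\alpha) - \mathrm{Unif}(\{0,1\}^{n/t})\bigr\|_{\mathrm{TV}}$. Expanding $\mu_\alpha$ in the Fourier basis on $\{0,1\}^n$, the law of $Mx$ depends only on the coefficients $\hat{\mu}_\alpha(T)$ for $T$ a union of hyperedges of $M$. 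The leading non-trivial terms are the level-$t$ coefficients supported on single hyperedges, so Parseval plus the triangle inequality bound the advantage per cell in terms of the level-$t$ Fourier weight $W^{=t}[\mu_\alpha]$, suitably averaged over the random choice of $M$.

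The core of the argument is a level-$t$ concentration inequality derived from Bonami--Beckner hypercontractivity: for the indicator of a set of relative density $p_\alpha = |S_\alpha|/2^n$, one has $\sum_{|T|=t}\widehat{\1_{S_\alpha}}(T)^2 \le p_\alpha^{2} \cdot \bigl(O(\log(1/p_\alpha))/t\bigr)^{t}$. Averaging over a uniformly random size-$t$ subset of $[n]$ (the marginal distribution of any single hyperedge of $M$) yields expected squared Fourier mass of order $W^{=t}[\mu_\alpha] / \binom{n}{t}$; multiplying by the $n/t$ hyperedges and summing over message cells, weighted by their mass $p_\alpha$, gives a total distinguishing advantage of $O\bigl((c/n)^{(t-1)/2}\cdot\mathrm{poly}(t)\bigr)$, which is $o(1)$ precisely when $c = o(n^{1-1/t})$.

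The main obstacle is uniformly controlling $W^{=t}[\mu_\alpha]$ across cells of widely different sizes, since exceptionally small cells carry disproportionately large Fourier mass. I would handle this via a standard typical-cell truncation: discard cells with mass below $2^{-2c}$ (whose total contribution is negligible by union bound) and apply the hypercontractive estimate only within the remaining well-behaved cells, where $\log(1/p_\alpha) = O(c)$. A secondary subtlety is the weak dependence between distinct hyperedges of a uniformly random perfect hypermatching; this can be handled by generating $M$ by sequentially revealing one hyperedge at a time and observing that the marginal of each single hyperedge is exchangeable with a uniform random $t$-subset, which is enough for the Fourier bound above to be applied edge by edge.
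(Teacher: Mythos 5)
First, a remark on scope: the paper itself gives no proof of this statement --- it is imported directly from Verbin and Yu \cite{VY11:cycle} --- so your sketch has to be measured against the argument in that reference. Its overall architecture (Yao's principle with a uniform hard distribution, Fourier analysis of Alice's message cells, the hypercontractive level-$k$ inequality, and a truncation to cells of mass at least $2^{-2c}$) is indeed the one used there and in Gavinsky et al.\ for $t=2$, and you have reproduced it faithfully at that level.

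There are, however, two genuine gaps. The first is in the reduction step. You claim Bob's advantage is controlled by $\E_M\lVert \mathrm{Law}(Mx \mid x\sim\mu_\alpha)-\mathrm{Unif}\rVert_{TV}$. This quantity is \emph{not} small for every low-communication protocol: if Alice sends the single bit $|x|\bmod 2$, then $\sum_i (Mx)_i \equiv |x| \pmod 2$ is constant on each cell, so $\mathrm{Law}(Mx)$ is supported on half of $\{0,1\}^{n/t}$ and its distance to uniform is $1/2$ already with $c=1$; your route therefore cannot certify that such protocols fail. The correct quantity is $\E_M\, d_{TV}\bigl(\mathrm{Law}(Mx),\mathrm{Law}(Mx+\1)\bigr)$, whose Fourier expansion involves only the coefficients $\hat{g}(v)$ with $|v|$ odd; this excludes both $v=\0$ and --- precisely because $n/t=2k$ is even --- the problematic top coefficient $v=\1$ corresponding to the global parity. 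This is exactly where the hypothesis $n=2kt$ enters, and your sketch never invokes it. The second gap is the restriction to level-$t$ coefficients: the surviving odd-weight $v$ contribute Fourier mass at every level $jt$ with $j$ odd, and all of these must be summed; for $jt\gg\log(1/p_\alpha)$ the level-$k$ inequality no longer applies and one needs a separate estimate combining Parseval with the (small) probability that a fixed $jt$-set is a union of $j$ hyperedges of $M$. A minor further issue: your final bound $O((c/n)^{(t-1)/2})$ would be $o(1)$ for all $c=o(n)$, which is too strong; the level-$t$ contribution is $O(c^{t/2}/n^{(t-1)/2})$, and it is this exponent that pins the threshold at $c=\Theta(n^{1-1/t})$.
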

Our lower bounds will be built upon the following basic construction.

\paragraph{Construction of $G(x,M)$.} Given vector $x$ and matrix $M$ respectively, Alice and Bob construct a bipartite graph $G(x,M)=(U,V,E)$, where $U=\{u_1,\cdots, u_{2n}\}$ and $V=\{v_1,\cdots, v_{2n}\}$, as follows.
Given $x\in\{0,1\}^n$, Alice adds a perfect matching between $U$ and $V$. 
For each $i\in[n]$, if $x_i=0$, she adds two parallel edges $(u_{2i-1},v_{2i-1})$ and $(u_{2i},v_{2i})$; otherwise if $x_i=1$, she adds two crossing edges $(u_{2i-1},v_{2i})$ and $(u_{2i},v_{2i-1})$ (see Figure~\ref{fig:alice}). 
\begin{figure}[htb]
	\centering	        
	\includegraphics[width=0.6\linewidth]{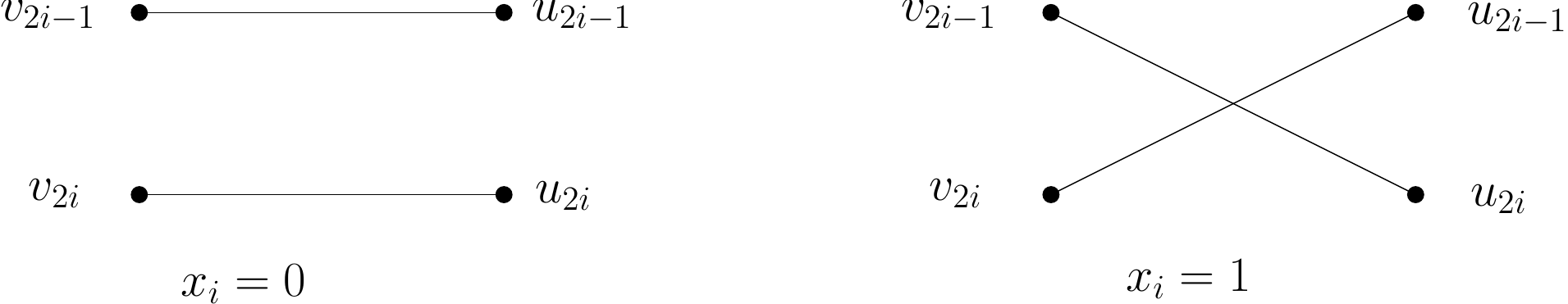}		
	\caption{Parallel (left) and crossing (right) matching according to the value of $x_i$}
	\label{fig:alice}
\end{figure}

Given $M$, Bob will do the following. For each $i\in [n/t]$ and the hyperedge $m_i\subset [n]$ (that corresponds to the $i$th row $M_i$), we use $m_{i,j}\in[n]$ to denote the $j$th element in $m_i$ and we let $S_i:=\{w| w= \textrm{$v_{2m_{i,j}-1}$ or $v_{2m_{i,j}}$ or $u_{2m_{i,j}-1}$ or $u_{2m_{i,j}}$}, j \in [t]\}$. 
For each $i\in [n/t]$ and $j\in [t-1]$, Bob adds two edges $(u_{2m_{i,j}-1},v_{2m_{i,j+1}-1})$ and $(u_{2m_{i,j}},v_{2m_{i,j+1}})$ 
(See Figure~\ref{fig:bob}). 
\begin{figure}[htb]
	\centering	        
	\includegraphics[width=0.6\linewidth]{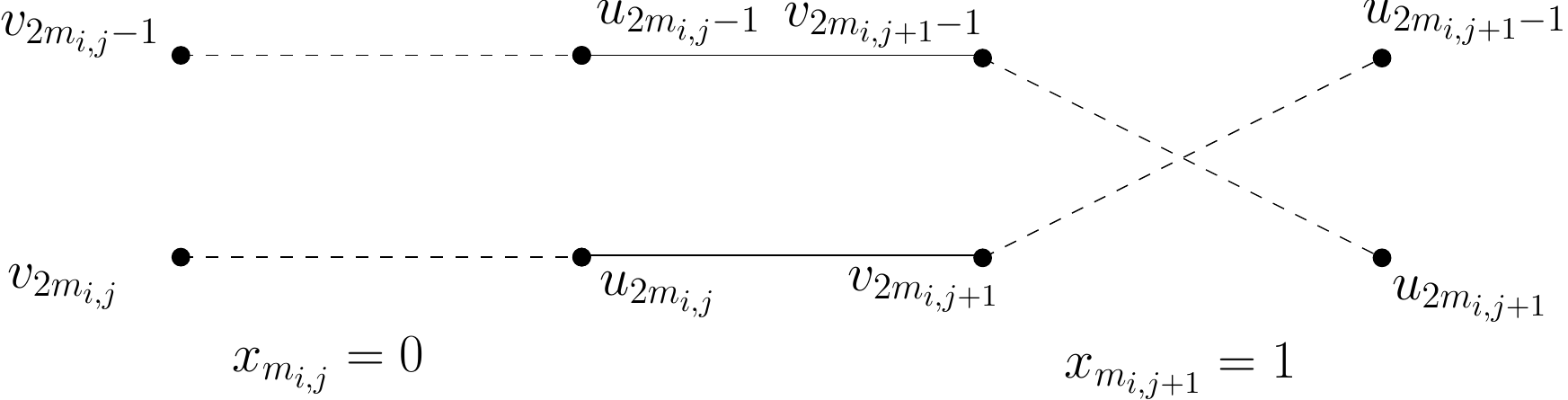}		
	\caption{Bob connects $(u_{2m_{i,j}-1},v_{2m_{i,j+1}-1})$ and $(u_{2m_{i,j}},v_{2m_{i,j+1}})$ for each $j\in [t-1]$}
	\label{fig:bob}
\end{figure} 

Observe that the edges added by Alice and Bob form two paths $p_{2i-1},p_{2i}$ over vertex set $S_i$, where $p_{2i-1}$ starts from $v_{2m_{i,1}-1}$ and $p_{2i}$ starts from $v_{2m_{i,1}-1}$ for each $i$. The entire graph $G(x,M)$ consists of $2n/t$ disjoint paths $\{p_1\cdots,p_{2n/t}\}$. It also has the following property.

\begin{fact}
Based on the value of $(Mx)_i$, we have: 1) if $(Mx)_i=0$, then $p_{2i-1}$ is a path from $v_{2m_{i,1}-1}$ to $u_{2m_{i,t}-1}$ and $p_{2i}$ is a path from $v_{2m_{i,1}}$ to $u_{2m_{i,t}}$; 2) if $(Mx)_i=1$, then $p_{2i-1}$ is a path from $v_{2m_{i,1}-1}$ to $u_{2m_{i,t}}$ and $p_{2i}$ is a path from $v_{2m_{i,1}}$ to $u_{2m_{i,t}-1}$.
\end{fact}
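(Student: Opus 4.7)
The plan is to track a parity bit along each of the two paths $p_{2i-1}, p_{2i}$ and show that Alice's crossing/parallel matching encodes an XOR that accumulates into the claimed value $(Mx)_i = \bigoplus_{j=1}^{t} x_{m_{i,j}}$ at the endpoint. Concretely, for each $j \in [t]$ define the two "twin" vertices at position $j$ as $v_{2m_{i,j}-1}, v_{2m_{i,j}}$ on the $V$-side and $u_{2m_{i,j}-1}, u_{2m_{i,j}}$ on the $U$-side, and encode any such vertex by the pair (position $j$, parity $\sigma \in \{0,1\}$) where $\sigma = 0$ means "the $-1$ copy" and $\sigma = 1$ means "the non-$-1$ copy."

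The first step is to record what each of the two edge-types does to the parity $\sigma$. An Alice edge incident to $v_{2m_{i,j}-1+\sigma}$ goes to $u_{2m_{i,j}-1+\sigma}$ if $x_{m_{i,j}} = 0$ (parallel) and to $u_{2m_{i,j}-1+(1-\sigma)}$ if $x_{m_{i,j}} = 1$ (crossing), so the $V \to U$ Alice step sends $\sigma \mapsto \sigma \oplus x_{m_{i,j}}$. A Bob edge from $u_{2m_{i,j}-1+\sigma}$ goes to $v_{2m_{i,j+1}-1+\sigma}$ by construction, so the $U \to V$ Bob step preserves $\sigma$ and advances the position index $j \to j+1$.

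The second step is a direct induction on $j$. Starting from $v_{2m_{i,1}-1}$ (so $\sigma_1 = 0$) and alternating Alice/Bob edges, after the Bob edge out of position $j$ we reach $v_{2m_{i,j+1}-1+\sigma_{j+1}}$ with
\begin{equation*}
\sigma_{j+1} \;=\; \sigma_j \oplus x_{m_{i,j}} \;=\; x_{m_{i,1}} \oplus \cdots \oplus x_{m_{i,j}}.
\end{equation*}
At position $t$ no Bob edge remains, so the path terminates after the final Alice edge at parity $\sigma_t \oplus x_{m_{i,t}} = \bigoplus_{j=1}^t x_{m_{i,j}} = (Mx)_i$, i.e.\ at $u_{2m_{i,t}-1}$ if $(Mx)_i = 0$ and at $u_{2m_{i,t}}$ if $(Mx)_i = 1$. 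The same bookkeeping with initial parity $\sigma_1 = 1$ (starting at $v_{2m_{i,1}}$) yields the twin path $p_{2i}$ with endpoint of the opposite parity at $U$, which is exactly the second half of each case of the Fact.

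The only real thing to be careful about is the indexing convention together with the fact that Alice's matching is the \emph{only} way to cross the $V/U$ bipartition while Bob's edges are the only way to advance the position $j$; once that dichotomy is stated, both statements follow by the one-line induction above. I do not foresee any genuine obstacle here — it is a purely structural verification, and the accompanying figures make it visually immediate; the proof is essentially the parity-tracking argument I outlined, with a symmetric companion computation for $p_{2i}$.
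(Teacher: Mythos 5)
Your proof is correct and is exactly the parity-tracking verification that the paper leaves implicit (the Fact is stated without proof there): Alice's edge at position $j$ XORs the parity with $x_{m_{i,j}}$, Bob's edge preserves it while advancing the position, and the induction accumulates $(Mx)_i=\bigoplus_{j=1}^{t}x_{m_{i,j}}$ at the terminal $U$-vertex, with the symmetric computation for $p_{2i}$ starting at parity $1$. No gaps.
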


\subsection{Minimum spanning tree}
\begin{theorem}\label{thm:lower_mst}
	In the insertion-only model, if all edges of the graph have weights in $[W]$, any algorithm that $(1\pm \varepsilon)$-approximates the weight of the MST must use $\Omega(n^{1-\frac{4\varepsilon}{W-1}})$ bits of space.
\end{theorem}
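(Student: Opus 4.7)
The plan is to reduce the Boolean Hidden Hypermatching problem $\BHH_n^t$, with $t = \Theta((W-1)/\varepsilon)$, to $(1\pm\varepsilon)$-approximation of MST weight on insertion-only streams with edge weights in $[W]$, and to invoke the $\Omega(n^{1-1/t})$ one-way communication lower bound for $\BHH_n^t$. I will extend the base construction $G(x,M)$ from the excerpt by having Bob add weight-$1$ closure edges encoding his vector $w$ and then appending a weight-$W$ spanning structure so the resulting graph is connected and all edge weights lie in $[W]$.

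Alice inserts her $2n$ parallel-or-crossing matching edges on $U\cup V$ (each of weight $1$) and Bob inserts his $M$-indexed path connectors (each of weight $1$), producing the $2n/t$ disjoint paths of length $2t-1$ described in the excerpt. For each hyperedge $m_i$, Bob then inserts two weight-$1$ closure edges determined by $w_i$: if $w_i=0$ he adds the parallel pair $(v_{2m_{i,1}-1},u_{2m_{i,t}-1})$, $(v_{2m_{i,1}},u_{2m_{i,t}})$, and if $w_i=1$ he adds the crossed pair $(v_{2m_{i,1}-1},u_{2m_{i,t}})$, $(v_{2m_{i,1}},u_{2m_{i,t}-1})$. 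A short per-hyperedge case analysis, combining the endpoint fact about $(Mx)_i$ from the excerpt with the two closure options, shows that $(Mx+w)_i=0$ produces two vertex-disjoint cycles of length $2t$ while $(Mx+w)_i=1$ produces a single cycle of length $4t$. Hence the final weighted graph has $N=4n$ vertices and exactly $4n$ weight-$1$ edges, with $c=2n/t$ cycle components when $Mx+w=\mathbf{0}$ and $c=n/t$ when $Mx+w=\mathbf{1}$. To make the graph connected, Bob also inserts a star centred at $v_1$ consisting of $N-1$ weight-$W$ edges.

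By Kruskal's rule, the MST greedily takes $4n-c$ weight-$1$ edges (dropping one redundant edge per cycle) and then $c-1$ weight-$W$ star edges to connect the $c$ cycle components, giving
\begin{equation*}
\MST=(4n-c)+(c-1)W=(4n-1)+(c-1)(W-1).
\end{equation*}
Writing $v_0,v_1$ for the two possible MST values, the gap is $v_0-v_1=(n/t)(W-1)$, while $v_0,v_1=\Theta(n)$ once $t=\Omega(W)$. Thus the two answers are multiplicatively separated by a factor of roughly $1+(W-1)/(4t)$, and choosing $t=\lfloor (W-1)/(\kappa\varepsilon)\rfloor$ for an appropriate absolute constant $\kappa$ forces any $(1\pm\varepsilon)$-approximation of MST to answer differently on the two cases.

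Plugging this $t$ into the $\Omega(n^{1-1/t})$ lower bound for $\BHH_n^t$, and recalling $N=\Theta(n)$, delivers the claimed $\Omega(n^{1-4\varepsilon/(W-1)})$ space bound after absorbing constants. I anticipate the main technical hurdle to be the per-hyperedge case analysis that ties the parity $(Mx+w)_i$ to whether the local structure is two cycles of length $2t$ or a single cycle of length $4t$; a secondary detail is pinning down the constant $\kappa$ so that the exponent comes out with coefficient exactly $4$ rather than a larger absolute constant, which amounts to a careful accounting of the $(1\pm\varepsilon)$-interval widths against $v_0-v_1$.
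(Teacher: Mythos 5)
Your proposal is correct and follows essentially the same route as the paper: the same reduction from $\BHH_n^t$ with $t=\Theta((W-1)/\varepsilon)$, the same gadget $G(x,M)$ with $w$-dependent closure edges producing $2n/t$ versus $n/t$ weight-$1$ cycles, and added weight-$W$ edges so that the MST weight separates the two cases. The only substantive difference is the connecting structure: the paper chains the gadgets with weight-$1$ edges and places one weight-$W$ bridge per hyperedge, so the smaller MST value is exactly $4n-1$ (no $\varepsilon$-dependent term) and $t=(W-1)/(4\varepsilon)$ suffices, whereas your weight-$W$ star leaves a $(c-1)(W-1)=\Theta(\varepsilon n)$ contribution in \emph{both} MST values, which (as you anticipate) pushes $\kappa$ slightly above $4$ for the one-sided guarantee and to about $8$ for the two-sided one, costing a marginally weaker exponent; the paper's gadget avoids this, although its own constant $4$ is itself only tight for the one-sided $(1+\varepsilon)$ guarantee.
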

\begin{proof}
	Given $x$ and $M$, Alice and Bob first construct the graph $G(x,M)$ as describe above. Next Bob adds $(u_{2m_{i,t}-1},v_{2m_{i,1}-1})$ and $(u_{2m_{i,t}},v_{2m_{i,1}})$ if $w_i=0$; adds $(u_{2m_{i,t}-1},v_{2m_{i,1}})$ and $(u_{2m_{i,t}},v_{2m_{i,1}-1})$ if $w_i=1$. The weight of all the edges added so far is $1$. Next Bob places edges $(v_{2m_{i,t}},v_{2m_{i+1,1}})$ with weight $1$ for $i=1,\cdots,n/t-1$ and edges $(v_{2m_{i,t}},u_{2m_{i,t}})$ with weight $W$ for each $i\in[n/t]$, so that the graph become connected. By similar argument as above, if $Mx+w=\textbf{0}$, all the edges $(v_{2m_{i,t}},u_{2m_{i,t}})$ must be picked in any minimum spanning tree, since each of these edges forms a cut, and thus the weight of any MST is $nW/t+4n-n/t-1=4n\varepsilon+4n-1$, where we set $t=(W-1)/4\varepsilon$. On the other hand, when $Mx+w=\textbf{1}$, the weight of the MST is $4n-1$, since in this case, the graph is already connected without those edges with weight $W$. So if the algorithm can compute an $(1+\varepsilon)$-approximation of the weight of the minimum spanning tree, it solves the $\BHH_n^t$ problem. This completes the proof.
\end{proof}
\subsection{Testing connectivity}
\begin{theorem}\label{thm:lowerconnec}
	In the insertion-only model, to distinguish whether a graph of $4n$ vertices is connected or $\frac{1}{8t+1}$-far from being connected, any algorithm must use $\Omega(n^{1-1/t})$ bits of space.
\end{theorem}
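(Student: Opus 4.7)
My plan is to reduce from $\BHH_n^t$ to streaming connectivity testing, following the same template as the proof of Theorem~\ref{thm:lower_mst}. Given Alice's input $x \in \{0,1\}^n$ and Bob's input $(M, w)$, I would have them jointly produce the graph $G(x,M)$ described in the preamble: Alice first streams the $2n$ matching edges that encode $x$ and sends the memory state of the assumed tester to Bob; Bob then continues the stream with his $2(t-1)$ path edges per hyperedge, the two closing edges per hyperedge chosen by $w_i$ (parallel if $w_i=0$, crossing if $w_i=1$, exactly as in the MST proof), and finally a \emph{spine} of $n/t - 1$ edges $(v_{2m_{i,1}-1}, v_{2m_{i+1,1}-1})$ for $i = 1, \ldots, n/t - 1$.

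The core of the argument is a case analysis based on the Fact already recorded: the closing edges fuse the two paths $p_{2i-1}$ and $p_{2i}$ on $S_i$ into a single cycle precisely when $(Mx+w)_i = 1$, and into two vertex-disjoint cycles (one on odd-indexed vertices and one on even-indexed vertices of $S_i$) when $(Mx+w)_i = 0$. When $Mx+w = \mathbf{1}$, each group $S_i$ is a single cycle through the spine-vertex $v_{2m_{i,1}-1}$, so the $n/t - 1$ spine edges glue all $n/t$ cycles together and the graph is connected. When $Mx+w = \mathbf{0}$, the spine still passes through the cycle inherited from $p_{2i-1}$ in every group, but the twin cycle inherited from $p_{2i}$ contains only even-indexed vertices of $S_i$ and no spine vertex, hence is completely isolated, yielding at least $n/t$ extra connected components.

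A direct count gives $m = 2n + (2n - 2n/t) + 2n/t + (n/t - 1) = 4n + n/t - 1$ total edges. Repairing connectivity in the $\mathbf{0}$ case therefore requires adding at least $n/t$ new edges, so the graph is $\frac{n/t}{4n + n/t - 1} \geq \frac{1}{8t+1}$-far from connected; this last inequality reduces to $(8t+1)(n/t) = 8n + n/t \geq 4n + n/t - 1$, which is immediate. Consequently any streaming tester that distinguishes connected graphs from those $\frac{1}{8t+1}$-far from connected yields a one-way protocol for $\BHH_n^t$ whose communication equals the tester's memory, so the claimed $\Omega(n^{1-1/t})$ lower bound follows from the known BHH communication lower bound.

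The only delicate point I expect is topological bookkeeping: one must verify that the spine-vertex $v_{2m_{i,1}-1}$ really does sit on the ``$p_{2i-1}$-cycle'' (and not on the ``$p_{2i}$-cycle'') uniformly across all four combinations of $(Mx)_i$ and $w_i$, since parallel and crossing closings permute endpoints differently; one must also check that no edge in the graph touches both an even-indexed and an odd-indexed vertex within an $S_i$ (Alice's matching and Bob's path and closing edges are all parity-preserving in the indices, and the spine uses only odd-indexed $v$ vertices), which is what guarantees the twin cycles really are isolated in the $\mathbf{0}$ case. Once these checks are done the remaining edge count and reduction to $\BHH_n^t$ are routine.
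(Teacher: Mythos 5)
Your reduction is essentially the paper's: same graph $G(x,M)$, same $w$-dependent closing edges, same case analysis (each group $S_i$ becomes one cycle iff $(Mx+w)_i=1$ and two vertex-disjoint cycles otherwise), same spine-plus-component count giving $n/t+1$ components in the $\mathbf{0}$ case versus a connected graph in the $\mathbf{1}$ case, and the same appeal to the $\BHH_n^t$ bound. The only cosmetic difference is the spine: the paper uses $(v_{2m_{i,t}},v_{2m_{i+1,1}})$ while you use $(v_{2m_{i,1}-1},v_{2m_{i+1,1}-1})$; both work, and yours is arguably cleaner since $v_{2m_{i,1}-1}$ is unambiguously the start of $p_{2i-1}$. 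Your edge count $4n+n/t-1$ and the verification that $n/t>\frac{m}{8t+1}$ are also correct.

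One sub-argument in your last paragraph is wrong, though the conclusion it is meant to support is true. You claim that Alice's matching edges and Bob's closing edges are ``parity-preserving in the indices'' and that the $p_{2i}$-cycle therefore contains only even-indexed vertices of $S_i$. Neither holds: when $x_j=1$ Alice adds the crossing edges $(u_{2j-1},v_{2j})$ and $(u_{2j},v_{2j-1})$, which flip parity, and likewise Bob's closing edges flip parity when $w_i=1$; consequently $p_{2i}$ generally visits odd-indexed vertices. The fact you actually need --- that in the $\mathbf{0}$ case the cycle closed from $p_{2i}$ avoids every spine vertex --- follows for a simpler reason: $p_{2i-1}$ and $p_{2i}$ partition the vertices of $S_i$, the closing edges in the $\mathbf{0}$ case seal each path into a cycle on its own vertex set, and the unique spine vertex in $S_i$, namely $v_{2m_{i,1}-1}$, is the starting vertex of $p_{2i-1}$ and hence lies on the other cycle. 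With that substitution the proof is complete.
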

\begin{proof}
	Given $x$ and $M$, Alice and Bob first construct the graph $G(x,M)$.
	Next Bob adds another set of edges based on vector $w$. If $w_i=0$, he adds $(u_{2m_{i,t}-1},v_{2m_{i,1}-1})$ and $(u_{2m_{i,t}},v_{2m_{i,1}})$; if $w_i=1$, he adds $(u_{2m_{i,t}-1},v_{2m_{i,1}})$ and $(u_{2m_{i,t}},v_{2m_{i,1}-1})$. So when $(Mx)_i+w_i=0$, $p_{2i-1}$ and $p_{2i}$ become $2$ disjoint cycles. On the other hand, when $(Mx)_i+w_i=1$, $p_{2i-1}$ and $p_{2i}$ together form a larger cycle.
	Now Bob places $(v_{2m_{i,t}},v_{2m_{i+1,1}})$ in $E$ for $i=1,\cdots,n/t-1$ which connect $p_{2i}$ with $p_{2(i+1)}$ for all $i\in[n/t-1]$, i.e. all the paths in $G(x,M)$ with even indices become a connected component. The total number of edges is $8n+n/t$. When $Mx+w=\textbf{0}$, the graph has $n/t+1$ components which is $\frac{1}{8t+1}$-far from connected; when $Mx+w=\textbf{1}$ the graph is connected. So if a streaming algorithm can
	distinguish whether a graph of size $4n$ is connected or $1/8t$-far from being connected, it solves $\BHH_n^t$, since Alice can first run the algorithm on her part of the graph and send the memory to Bob, and then Bob continues to run the algorithm on his part and output the answer. Therefore, the communication lower bound of $\BHH_n^t$ implies a space lower bound of testing connectivity. 
\end{proof}

\subsection{Testing cycle-freeness}
As in the proof of Theorem~\ref{thm:lowerconnec}, given $x$ and $M$, Alice and Bob first construct $G(x,M)$. Then, for $i\in[n/t]$, Bob adds $(u_{2m_{i,t}-1},v_{2m_{i,1}-1})$ if $w_i=0$; adds $(u_{2m_{i,t}-1},v_{2m_{i,1}})$ if $w_i=1$. The total number of edges is less than $8n$. Through similar arguments, it is easy to verify that if if $Mx+w=\textbf{0}$, the graph has exactly $n/t$ cycles and $n/t$ paths, which is $1/8t$-far from cycle-free. On the contrary, if $Mx+w=\textbf{1}$, the graph has $n/t$ paths and no cycle. So if an algorithm can
distinguish whether a graph of size $4n$ is cycle-free or $1/8t$-far from cycle-free, it solves $\BHH_n^t$.
\begin{theorem}
	In the insertion-only model, any algorithm that can distinguish whether a graph of $4n$ vertices is cycle-free or $1/8t$-far from being cycle-free, must use $\Omega(n^{1-1/t})$ bits of space.
\end{theorem}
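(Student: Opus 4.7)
My plan is to follow exactly the reduction sketched in the paragraph preceding the theorem and verify that the two cases produce graphs that are well-separated with respect to cycle-freeness, so that any streaming tester immediately yields a one-way communication protocol for $\BHH_n^t$.

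First, on input $(x,M,w)$, Alice and Bob construct $G(x,M)$ as defined earlier (using Alice's $x$ to decide parallel/crossing matchings and Bob's $M$ to glue the $2n/t$ disjoint paths $p_1,\dots,p_{2n/t}$). Using the fact about the endpoints of $p_{2i-1}$ and $p_{2i}$, I will then verify that Bob's additional edges $(u_{2m_{i,t}-1},v_{2m_{i,1}-1})$ when $w_i=0$ and $(u_{2m_{i,t}-1},v_{2m_{i,1}})$ when $w_i=1$ have the following effect: on the component corresponding to the $i$-th hyperedge, the single extra edge closes a cycle precisely when $(Mx)_i+w_i=0$, and leaves a single longer path when $(Mx)_i+w_i=1$. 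This is a direct case analysis mirroring the connectivity reduction. Counting edges shows the total number of edges is at most $8n$.

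In the YES case ($Mx+w=\mathbf{0}$), every one of the $n/t$ hyperedges contributes a cycle, so there are $n/t$ edge-disjoint cycles in $G$. To make $G$ cycle-free, at least one edge from each cycle must be deleted, hence at least $n/t$ deletions are required; since $m\le 8n$, this is at least $\tfrac{1}{8t}\cdot m$, so $G$ is $\tfrac{1}{8t}$-far from being cycle-free. In the NO case ($Mx+w=\mathbf{1}$), the graph is a disjoint union of $n/t$ paths and therefore cycle-free.

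Thus a single-pass streaming tester distinguishing cycle-free graphs from graphs $\tfrac{1}{8t}$-far from cycle-free on $4n$ vertices gives a one-way protocol for $\BHH_n^t$: Alice runs the tester on her edges and transmits its memory state to Bob, who updates with his edges and outputs the tester's decision. The $\Omega(n^{1-1/t})$ one-way communication lower bound for $\BHH_n^t$ from the cited theorem therefore transfers to the space of the tester. The only subtlety to watch, and the place I expect to spend most of the care, is bookkeeping the endpoints of $p_{2i-1}$ and $p_{2i}$ correctly so that the parity computation $(Mx)_i+w_i$ maps to ``cycle closed'' versus ``path extended''; everything else is immediate from the reduction template.
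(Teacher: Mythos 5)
Your proposal is correct and follows exactly the paper's reduction: construct $G(x,M)$, add Bob's single extra edge per hyperedge according to $w_i$, verify via the endpoint Fact that a cycle closes iff $(Mx)_i+w_i=0$, and transfer the $\BHH_n^t$ one-way communication bound to streaming space. The only point worth tightening is the farness count: the constructed graph has $m=4n-n/t<8n$ edges, so $n/t$ edge deletions is strictly more than $m/(8t)$, matching the ``more than $\varepsilon m$'' requirement in the definition of $\varepsilon$-far.
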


\subsection{Testing bipartiteness of planar graphs}
Alice and Bob first construct the graph $G(x,M)$. Next, for each $i\in[n/t]$, Bob adds edges $(v_{2m_{i,1}-1}, \xi_1)$ and $(v_{2m_{i,1}}, \xi_2)$, where $\xi_1,\xi_2$ are new vertices.  For $i\in[n/t]$, Bob also adds $(u_{2m_{i,t}-1},\xi_1)$ and $(u_{2m_{i,t}},\xi_2)$ if $w_i=0$; adds $(u_{2m_{i,t}-1},\xi_2)$ and $(u_{2m_{i,t}},\xi_1)$ if $w_i=1$. For this problem we assume $t$ is odd. So by similar arguments, we can easily verify that, if $Mx+w=\textbf{0}$, the graph contains $2n/t$ edge-disjoint cycles of length $2t+1$, and if $Mx+w=\textbf{1}$, the graph has no odd cycle, and thus bipartite. The graph constructed is planar and has $4n+2$ vertices and $8n+4n/t$ edges, so we have the following lower bound for testing bipartiteness.
\begin{theorem}
	In the insertion-only model, any algorithm that can distinguish whether a planar graph of $4n+2$ vertices is bipartite or $\frac{1}{4t+2}$-far from being bipartite, must use $\Omega(n^{1-1/t})$ bits space.
\end{theorem}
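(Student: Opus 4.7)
The plan is to reduce from $\BHH_n^t$ along the template used in the preceding lower-bound proofs. Alice encodes $x$ via the graph $G(x,M)$ from the excerpt, Bob extends it using $w$ together with two fresh vertices $\xi_1,\xi_2$, and any streaming tester for bipartiteness on the resulting $(4n+2)$-vertex planar graph yields a one-way protocol for $\BHH_n^t$, whose memory is thus $\Omega(n^{1-1/t})$.

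First I would verify the structural dichotomy. Recall that each of the $2n/t$ paths in $G(x,M)$ has $2t$ vertices and $2t-1$ edges. Bob's extra edges attach each path's two endpoints to the pair $\{\xi_1,\xi_2\}$: the two $v$-endpoints always go to $\xi_1$ and $\xi_2$ respectively, while the assignments of the two $u$-endpoints depend on $w_i$. A direct case analysis on $(Mx)_i$ and $w_i$ shows that for each hyperedge $i$, if $(Mx)_i+w_i=0$ then $p_{2i-1}$ and $p_{2i}$ close into two disjoint cycles through $\xi_1$ and $\xi_2$ respectively, each of length $2t+1$, whereas if $(Mx)_i+w_i=1$ the two paths glue into a single cycle of length $4t+2$ passing through both hubs.

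Second I would translate this into the bipartiteness guarantees. In the $Mx+w=\mathbf{0}$ case, the $2n/t$ cycles of odd length $2t+1$ are pairwise edge-disjoint (different hyperedges share only the two hub vertices, never edges), so at least $2n/t$ edge deletions are required to destroy every odd cycle; combined with the total edge count $m = O(n)$ of the construction, this makes the graph $\tfrac{1}{4t+2}$-far from bipartite. In the $Mx+w=\mathbf{1}$ case, since $G(x,M)$ is itself a disjoint union of paths, every cycle of the combined graph must traverse $\xi_1$ and/or $\xi_2$ and decomposes into "half-paths" between the hubs; each such half-path has odd length $2t+1$, and cycle closure forces an even number of half-paths, so every cycle has even length and the graph is bipartite.

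Third I would verify planarity and consolidate parameters. Each hyperedge's gadget is a theta-like graph between $\xi_1$ and $\xi_2$ on $4t$ internal vertices, which is clearly planar with $\xi_1,\xi_2$ on its outer face; since different gadgets share only $\xi_1$ and $\xi_2$, the full graph admits a book embedding with $\xi_1,\xi_2$ on the spine and each gadget drawn on its own page, which is planar. The vertex count is $4n+2$ and the edge count is linear in $n$, consistent with the claimed $\tfrac{1}{4t+2}$ distance. The main obstacle is the $Mx+w=\mathbf{1}$ direction: one must carefully verify that every cycle in the glued graph—including cycles that mix half-paths from different hyperedges through the shared hubs—has even length. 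A clean way to handle this is to orient each half-path from $\xi_1$ to $\xi_2$ and observe that a proper $2$-coloring of each gadget (with $\xi_1$ and $\xi_2$ receiving opposite colors because $2t+1$ is odd) is consistent across gadgets at the shared hubs, giving a global $2$-coloring.
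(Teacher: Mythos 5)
Your construction and analysis coincide with the paper's: the same two hub vertices $\xi_1,\xi_2$, the same $w$-dependent attachment of the path endpoints, the same dichotomy between $2n/t$ edge-disjoint odd cycles of length $2t+1$ (when $Mx+w=\mathbf{0}$) and a bipartite union of odd-length $\xi_1$--$\xi_2$ paths (when $Mx+w=\mathbf{1}$), and the same reduction to $\BHH_n^t$. You in fact give more detail than the paper, which only asserts planarity and sketches the bipartiteness of the YES case; the one cosmetic difference is that the paper additionally assumes $t$ odd, an assumption your argument (correctly) never uses since $2t+1$ is odd and $4t+2$ is even for every $t$.
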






\bibliographystyle{alpha}
\bibliography{streamtester}

\newpage

\appendix


\begin{center}\huge\bf Appendix \end{center}

\section{Proofs about testing $k$-edge connectivity for $k\geq 2$ from Section~\ref{sec:k-edge-connectivity}}\label{app:edge-conn}
For $k\geq 2$, Orenstein and Ron have given a characterization of graphs that are $\varepsilon$-far from being $k$-edge connected~\cite{OR11:eulerianity} (which simplifies the corresponding result in~\cite{GR02:property}). We define a subset $C$ to be \emph{$\ell$-extreme} if $|E(C,V\setminus C)|= \ell<k$ and for any $C'\subset C$, $|E(C',V\setminus C')|> \ell$. 
\begin{lemma}[Corollary 14 and Claim 16 in~\cite{OR11:eulerianity}]\label{lem:eps_far_edge_connectivity}
	If $G$ is $\varepsilon$-far from $k$-edge-connected, then there are at least $\frac{2\varepsilon m}{k}$ disjoint subsets with an edge-cut smaller than $k$. For each such a subset $C$, it contains a minimal subset $C'\subseteq C$ that is $\ell$-extreme for some $\ell<k$. 
\end{lemma}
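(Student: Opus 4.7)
The statement splits into two independent claims: (1) existence of at least $\frac{2\varepsilon m}{k}$ pairwise disjoint subsets $C \subseteq V$ with $|E(C, V \setminus C)| < k$, and (2) for any such $C$, existence of a minimal subset $C' \subseteq C$ that is $\ell$-extreme for some $\ell < k$. The plan is to dispatch (2) first as a warm-up and then attack (1) by a contradiction-and-repair argument.

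For (2), given $C$ with cut $\ell_0 < k$, I would set $\ell^* := \min_{\emptyset \neq D \subseteq C} |E(D, V \setminus D)| \leq \ell_0 < k$ and let $C' \subseteq C$ be a subset of \emph{minimum cardinality} achieving this minimum cut value. Then $C'$ is $\ell^*$-extreme: its cut equals $\ell^* < k$ by construction, and for every proper $D \subsetneq C'$ one has $|E(D, V \setminus D)| \geq \ell^*$ by the definition of $\ell^*$, with strict inequality because equality would contradict the minimality of $|C'|$.

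For (1), I would argue by contradiction via an edge-repair count. Suppose a maximum pairwise disjoint family of cut-less-than-$k$ subsets has size $s < \frac{2\varepsilon m}{k}$, say $\{C_1,\ldots,C_s\}$ with $\ell_i := |E(C_i, V \setminus C_i)| < k$. I would then exhibit a supergraph $G' \supseteq G$ that is $k$-edge-connected and is obtained by inserting at most $\varepsilon m$ new edges, contradicting the $\varepsilon$-farness of $G$. The construction, for each $i$, adds $k - \ell_i$ new edges incident to $C_i$, but routes each new edge (whenever possible) to an endpoint inside some other $C_j$. Each such paired edge lifts the cut of two different family members by one, so the total number of inserted edges is bounded by $\lceil \tfrac{1}{2} \sum_i (k - \ell_i) \rceil \leq \lceil sk/2 \rceil < \varepsilon m$.

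The main obstacle is showing that the resulting graph $G'$ is actually $k$-edge-connected, rather than merely satisfying $|E_{G'}(C_i, V \setminus C_i)| \geq k$ for each $i$ in the family. To handle this, suppose some $D \subseteq V$ satisfies $|E_{G'}(D, V \setminus D)| < k$, which implies $|E_G(D, V \setminus D)| < k$ too. If $D$ is disjoint from every $C_i$, this directly contradicts the maximality of $\{C_1,\ldots,C_s\}$. Otherwise, $D$ crosses some $C_j$, and I would invoke the submodularity inequality
\begin{equation*}
|E(A \cap B, V \setminus (A \cap B))| + |E(A \cup B, V \setminus (A \cup B))| \leq |E(A, V \setminus A)| + |E(B, V \setminus B)|
\end{equation*}
together with its posimodular counterpart using $A \setminus B$ and $B \setminus A$, applied to $A = D$ and $B = C_j$. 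One of $D \cap C_j$, $D \setminus C_j$, $D \cup C_j$, or $C_j \setminus D$ must then be a nonempty proper set with cut strictly less than $k$ that is either disjoint from the family or strictly refines an existing member, in either case contradicting maximality. The careful case analysis over all crossing patterns is the technically delicate part of the argument; it mirrors standard uncrossing arguments for extreme cuts in the literature on graph connectivity.
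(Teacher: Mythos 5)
The paper itself gives no proof of this lemma: it is imported wholesale from Orenstein and Ron (Corollary~14 and Claim~16 of~\cite{OR11:eulerianity}), so there is nothing internal to compare against, and a self-contained proof would have to stand on its own. Your part~(2) does: taking $\ell^*$ to be the minimum cut value over nonempty $D\subseteq C$ and $C'$ a minimum-cardinality achiever is correct and is the standard argument, so the "extreme subset" half of the lemma is fine.

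Part~(1), however, has a genuine gap, and it sits exactly where you flag the "technically delicate" step. First, maximality of a \emph{maximum} pairwise disjoint family $\{C_1,\dots,C_s\}$ only forbids a deficient set that is disjoint from every $C_i$; a deficient set $D$ that crosses or refines some $C_j$ yields no contradiction, since replacing $C_j$ by $D\cap C_j$ (or any corner set meeting $C_j$) leaves the family size unchanged. So the concluding sentence "in either case contradicting maximality" does not hold. Second, the uncrossing itself is shaky in $G'$: after augmentation $d_{G'}(C_j)\ge k$, so submodularity gives only $d_{G'}(D\cap C_j)+d_{G'}(D\cup C_j)\le d_{G'}(D)+d_{G'}(C_j)$, from which neither corner need have cut below $k$; you would have to uncross in $G$, and then every small-cut corner you obtain intersects $C_j$, returning you to the first problem. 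Third, the substantive claim you need --- that $G$ can be made $k$-edge-connected by adding at most $\lceil\frac12\sum_i(k-\ell_i)\rceil\le\lceil sk/2\rceil$ edges whenever the maximum disjoint deficient family has size $s$ --- is essentially the Watanabe--Nakamura/Frank edge-connectivity augmentation theorem, whose known proofs go through splitting-off or an induction over the laminar family of extreme sets; it is not recoverable from the sketched pairing-plus-case-analysis. Note also that the factor $2$ in $\frac{2\varepsilon m}{k}$ is exactly what this pairing buys: the easy repair (add $k-\ell_i$ edges per set, endpoints anywhere) proves only a bound of $\frac{\varepsilon m}{k}$ disjoint sets. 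To close the argument you should either invoke the augmentation theorem explicitly and deduce the contrapositive in two lines, or follow the route of~\cite{GR02:property,OR11:eulerianity} and work from the start with the laminar structure of extreme sets (whose minimal members are automatically disjoint), rather than with an arbitrary maximum disjoint family.
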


Now we present the proof of Theorem~\ref{thm:kedgeconn}.
\begin{proof}[Proof of Theorem~\ref{thm:kedgeconn}]
	It is clear that $m\ge nk/2$ for any $k$-connected graph, and thus we can safely reject whenever $m<nk/2$. In the following, we will only consider the case that $m\ge nk/2$. 
	Our tester is then described in {\bf Algorithm} \ref{alg:test_kedgeconnectivity}.

	\begin{algorithm}
		\caption{TestKEdgeConnectivity}
		\label{alg:test_kedgeconnectivity}
		\begin{algorithmic}[1]
			\State
			Sample each vertex with probability $p:=(\varepsilon n/4k)^{-\varepsilon}$. If more than $16np$ vertices are sampled, abort and output {\bf Fail}. Let $S$ denote the set of sampled vertices.
			\State For each $v\in S$, maintain a $(k-1)$-sparse recovery sketch $\mathcal{S}_{k-1}(\edgesketch^v)$.
			\State  Maintain an AGM sketch of $G[S]$ using Theorem~\ref{thm:agm12_spanning_tree}.  
			\State {\bf Post-Processing:} 
			\State 
			Use the above sketch to recover a spanning forest $F$ of $G[S]$ using Theorem \ref{thm:agm12_spanning_tree}. 
			
			\State For each component $C\in F$, recover $\edgesketch^C$ from $\mathcal{S}_{k-1}(\edgesketch^C)$, and if it succeeds, {\bf Reject}.
			\State {\bf Accept}.
		\end{algorithmic}
	\end{algorithm}
	Note that the AGM sketch use space $\tilde{O}(|S|)=\tilde{O}(np)=\tilde{O}(k^\varepsilon n^{1-\varepsilon})$. In addition, each sampled vertex only needs to store a $k$-sparse recovery sketch, so the space complexity of the algorithm is $\tilde{O}(k)\cdot np=\tilde{O}(k^{1+\varepsilon}n^{1-\varepsilon})$. The post-processing time is near linear in the space, and the update time is $O(\poly\log n)$.
	
	For the correctness of the algorithm, we first note that if $G$ is $k$-edge connected, then $G$ will be accepted as long as there is no error happening when querying the $k$-sparse recovery sketches. This happens with probability $1-1/n$ by setting the error probability of the sketch to be $1/n^2$, and we will condition on this event. 
	
	Now if $G$ is $\varepsilon$-far from being $k$-edge connected, then from Lemma \ref{lem:eps_far_edge_connectivity}, it follows that there are at least $\frac{2\varepsilon m}{k}\ge\varepsilon n$ disjoint $\ell$-extreme subsets. Let $B_1,\cdots,B_s$ be the set of these $\ell$-extreme subsets where $s\ge\varepsilon n$. Observe that for any $\ell$-extreme subset $B$, the induced subgraph $G[B]$ is connected. This is true since otherwise, there exists a subset $B'\subset B$ satisfying $|E(B', B\setminus B')|=0$, which implies that $|E(B',V\setminus B')|\le |E(B,V\setminus B)|=\ell$, contradicting to the assumption that $B$ is $\ell$-extreme.
	
	Let $\mathcal{E}_i$ be the event that $B_i$ is entirely sampled out, and $\mathcal{F}_i$ be the event that none of the vertices in $\Gamma(B_i)$ is sampled. 
	Note that our algorithm will {\bf reject} if $\mathcal{E}_i\wedge\mathcal{F}_i$ happens for some $i$, and thus 
	our theorem will follow from the inequality that 
	\begin{eqnarray}
	\Pr\left[\bigvee_i (\mathcal{E}_i\wedge\mathcal{F}_i)\right]\geq \frac34. \label{inq:event_kconn}
	\end{eqnarray}
	
	Now we prove inequality~(\ref{inq:event_kconn}). Note that the events $(\mathcal{E}_i\wedge\mathcal{F}_i)$ are not necessarily independent across $i$ since two different $\ell$-extreme subsets may contain neighbors of each other or share neighbors. We have the following simple claim to deal with this issue.
	\begin{claim}\label{lem:independentSet}
		There exists a set $I\subset [s]$, with $|I|= s/k$, such that: 
		\begin{enumerate}
			\item\label{claim:1} $|E(B_i,B_j)|=0$ for all $i,j\in I$ and $i\neq j$, and
			\item\label{claim:2} $\sum_{i\in I}|B_i| \le \sum_{j=1}^s|B_j|/k$.
		\end{enumerate}  
	\end{claim}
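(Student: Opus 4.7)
The plan is to reduce the problem to finding a cheap independent set in an auxiliary conflict graph. Let $H$ be the graph on vertex set $[s]$ with $\{i,j\}\in E(H)$ iff $|E(B_i,B_j)|>0$ in $G$. Condition~(1) is then precisely independence in $H$, so I only need an independent set of size $s/k$ whose total $|B_i|$-weight is at most $(1/k)\sum_j|B_j|$. The key structural fact is that $H$ has maximum degree at most $k-1$: each $B_i$ is $\ell$-extreme with $\ell<k$, so $|E(B_i,V\setminus B_i)|\le k-1$, and since the $B_j$'s are pairwise disjoint every edge leaving $B_i$ lands in at most one other $B_j$, giving at most $k-1$ neighbors of $i$ in $H$.

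To obtain both the size lower bound and the weight upper bound from the same set, I would sort the subsets so that $|B_1|\le|B_2|\le\cdots\le|B_s|$ and then run the natural greedy: process the sorted indices left-to-right, adding $i$ to $I$ whenever no $H$-neighbor of $i$ already sits in $I$, and halt as soon as $|I|=s/k$. Condition~(1) holds by construction. For the size claim, each added index $j_r$ is responsible for blocking at most $k-1$ later indices (its $H$-neighbors), so after $r$ picks we have processed at most $rk$ items, giving $j_r\le rk$ and guaranteeing that the algorithm reaches $|I|=s/k$ within $[1,s]$. For the weight claim, I would charge every skipped index $j$ to the earliest picked neighbor $\phi(j)\in I$ and set $J:=j_{s/k}$. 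Monotonicity of the sorted order gives $|B_j|\ge|B_{j_{\phi(j)}}|$ for every skipped $j$ and $|B_j|\ge|B_{j_{s/k}}|$ for every unprocessed index $j>J$, so summing over all $j\in[s]$ yields
\[
\sum_{j=1}^s |B_j| \;\ge\; \sum_{r=1}^{s/k}\bigl(1+|\phi^{-1}(r)|\bigr)|B_{j_r}| \;+\; (s-J)\,|B_{j_{s/k}}|.
\]
A short count gives $\sum_r (k-1-|\phi^{-1}(r)|) = (k-1)(s/k)-(J-s/k) = s-J$, and since every $|B_{j_r}|\le |B_{j_{s/k}}|$ the tail term $(s-J)|B_{j_{s/k}}|$ absorbs exactly the deficit $\sum_r (k-1-|\phi^{-1}(r)|)|B_{j_r}|$, upgrading the display to $\sum_j|B_j|\ge k\sum_r|B_{j_r}|$, which is condition~(2).

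The main obstacle is that size and weight want to be satisfied simultaneously. A plain $k$-coloring of $H$ would hand us an independent set of size $\ge s/k$ (one color class) or a class of total weight $\le (1/k)\sum_j|B_j|$ (by averaging), but these two guarantees may live on different color classes. Sorting by size before running greedy is what forces the picked $B_i$'s to be the small ones, and the charging argument above is what turns the local degree bound into a global $1/k$-weight bound; an alternative would be to invoke Hajnal--Szemer\'edi for an equitable $k$-coloring of $H$, but the sorted-greedy route is more elementary and yields condition~(2) directly.
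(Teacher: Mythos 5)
Your proof is correct and takes essentially the same route as the paper: the paper's construction also greedily picks the smallest set not conflicting with any previously chosen one, uses the same bound of $k-1$ on the number of neighbors of an $\ell$-extreme set for the size guarantee, and establishes the weight bound by an equivalent charging argument. Your explicit tail-absorption accounting is just a more detailed version of the paper's partition of $[s]$ into $s/k$ groups, each containing a pick together with the sets it eliminated (all at least as large as that pick).
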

	
	\begin{proof}
		We say $B_i$ and $B_j$ are neighbors, if $|E(B_i,B_j)|>0$. We iteratively construct the index set $I\subset [s]$ as follows. We start from the empty set and add one index at each step. Let $I_t$ denote the set that at the end of step $t$. In the $(t+1)$-th step, we pick the smallest set $B_j$ that is not a neighbor of $B_h$ for any $h\in I_t$. Note that since each $\ell$-extreme set has at most $k-1$ neighbors, we can always find such a set if $t<s/k$. Let $I= I_{s/k}$. Then Item~\ref{claim:1} of the claim follows by our construction. Now let $B^{(t)}$ be the set that we picked in the $t$-th step. Since each $B^{(t)}$ may intersect with at most $k$ sets, and $B^{(t)}$ is the smallest set that has no intersection with all sets picked in the first $t-1$ steps, there must exist a partition of $[s]$ into $s/k$ sets $\{P_1,P_2,\cdots, P_{s/k}\}$, such that for any $t\leq s/k$ and $j\in P_t$, $|B_j|\ge |B^{(t)}|$. This proves Item~\ref{claim:2} of the claim.
	\end{proof}
	
	Now we give a lower bound for $\Pr[\bigvee_{i\in I}\mathcal{E}_i]$. Let $p_i=p^{|B_i|}$ be the probability that all vertices in $B_i$ are sampled. Using the fact $1-x\le e^{-x}$ for all $x$ and the AM-GM inequality, we have
	\begin{eqnarray*}
		\ln \prod_{i\in I}(1-p_i)\le {-\sum_{i\in I}p_i}\le {- |I|\cdot(\prod_{i\in I}p_i )^{1/|I|} } = {-|I|\cdot p^{\sum_{i\in I}|B_i| /|I|}} & =& -(s/k)\cdot p^{\sum_{i} |B_i|/s} \\
		&\le& -\frac{\varepsilon n}{k}\cdot p^{1/\varepsilon}.
	\end{eqnarray*}
	Thus we have 
	$$\Pr\left[\bigvee_{i\in I}\mathcal{E}_i\right]=1-(\prod_{i\in I}(1-p_i))\ge 1-e^{-\frac{\varepsilon n}{k}\cdot p^{1/\varepsilon}}\ge 15/16,$$
	since we set $p=(\varepsilon n/4k)^{-\varepsilon}$.
	
	
	Now by the property of $I$ as guaranteed in Claim~\ref{lem:independentSet}, it follows that $\mathcal{F}_j$ and $\mathcal{E}_i$ are independent for all $i,j\in I$. Hence, conditioned on the event $\bigvee_{i\in I}\mathcal{E}_i$, the probability of $\bigvee_{i\in I} (\mathcal{E}_i\wedge\mathcal{F}_i)$ happening is
	$$\Pr\left[\bigvee_{i\in I} (\mathcal{E}_i\wedge\mathcal{F}_i) \Bigm\vert \bigvee_{i\in I}\mathcal{E}_i \right]\ge\min_{j\in I} \Pr[\mathcal{F}_j]=\min_{j\in I} (1-p)^{|\Gamma(B_j)|}\ge (1-p)^k\ge e^{-pk-p^2k}\ge 0.8,$$
	where in the penultimate inequality, we used the basic inequality that $1-x\ge e^{-x-x^2}$ for $x\le 0.5$; the last inequality holds for $k\le 0.1/p$ or equivalent $k\le O(n^{\varepsilon/(1+\varepsilon)})$.
	
	
	Finally, we have 
	\begin{eqnarray*}
		\Pr\left[\bigvee_i (\mathcal{E}_i\wedge\mathcal{F}_i)\right]\ge \Pr\left[\bigvee_{i\in I} (\mathcal{E}_i\wedge\mathcal{F}_i)\right] &=&\Pr\left[\left(\bigvee_{i\in I}(\mathcal{E}_i\wedge\mathcal{F}_i)\right)\bigwedge \left(\bigvee_{i\in I}\mathcal{E}_i\right) \right]\\
		&=&\Pr\left[\bigvee_{i\in I}\mathcal{E}_i\right]\cdot\Pr\left[\bigvee_{i\in I} (\mathcal{E}_i\wedge\mathcal{F}_i) \Bigm\vert \bigvee_{i\in I}\mathcal{E}_i \right] \\
		&\ge&\frac{15}{16}\cdot\frac{4}{5}\ge 3/4. 
	\end{eqnarray*}
\end{proof}
We remark that the problem can still be solved in space $\tilde{O}(kn^{1-\varepsilon})$ for larger $k$ by testing the neighborhood of all subsets of size smaller than $1/\varepsilon$ in $S$, however the post-processing time will be $\tilde{O}(kn^{O(1/\varepsilon)})$. Also, $k\le O(n^{\varepsilon})$ is the most interesting case for us, since we are mostly interested in $o(n)$ space algorithms.

\section{Testing other graph properties}\label{app:otherProperties}
\subsection{$k$-vertex connectivity}\label{app:kvertex}
A graph is $k$-vertex connected if the minimum vertex cut of the graph has size at least $k$, i.e. it remains connected whenever fewer than $k$ vertices are removed. The following lemma on the structure of graphs that are $\varepsilon$-far from being $k$-vertex connected can be directly deduced from Corollary 19 in~\cite{OR11:eulerianity}.
\begin{lemma}\label{lem:eps_far_vertex_conn}
	If the graph is $\varepsilon$-far from $k$-vertex connected, then there exists at least $\frac{\varepsilon m}{2k}$ subsets $C$ of size at most $\frac{2k n}{\varepsilon m}$ such that $G[C]$ is connected and $\Gamma(C)< k$.
\end{lemma}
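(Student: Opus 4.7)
The plan is to prove the lemma by the contrapositive, mirroring the structure of Lemma~\ref{lem:eps_far_edge_connectivity} for the edge-connectivity case. Assume for contradiction that $G$ contains fewer than $\varepsilon m/(2k)$ disjoint subsets $C$ satisfying $G[C]$ connected, $|\Gamma(C)|<k$, and $|C|\le 2kn/(\varepsilon m)$. I would exhibit an explicit sequence of at most $\varepsilon m$ edge insertions that turns $G$ into a $k$-vertex-connected graph, directly contradicting the hypothesis that $G$ is $\varepsilon$-far from being $k$-vertex-connected.

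The repair procedure I plan to use is the following greedy one. While the current graph $G'$ is not $k$-vertex-connected, identify a set $C$ that is minimal under inclusion with the property that $G'[C]$ is connected and its current neighborhood has size strictly less than $k$; then insert $k-|\Gamma(C)|\le k$ new edges from one fixed vertex of $C$ to distinct vertices in $V\setminus(C\cup\Gamma(C))$, raising $|\Gamma(C)|$ to exactly $k$ without altering any induced subgraph on $C$. Because edges are only added and never deleted, each successive $C_i$ picked this way is also a valid witness in the original graph $G$: $G[C_i]$ remains connected, and $|\Gamma_G(C_i)|\le|\Gamma_{G'}(C_i)|<k$ since boundaries only grow under edge insertions.

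Counting then closes the argument. Each iteration costs at most $k$ edge insertions, so if $G$ is $\varepsilon$-far we need $rk\ge\varepsilon m$, giving $r\ge\varepsilon m/k$. Since the witnesses are pairwise disjoint, $\sum_i|C_i|\le n$, and by averaging at most $r/2$ of them can have size strictly greater than $2n/r\le 2kn/(\varepsilon m)$. Consequently at least $r/2\ge\varepsilon m/(2k)$ of the $C_i$'s satisfy the size bound, contradicting the assumption.

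The main obstacle I expect is showing that the sequence of minimal witnesses produced by the greedy procedure can be taken to be pairwise disjoint. Two bad sets in the same graph can cross (share vertices without one containing the other), and this is more delicate for vertex cuts than for edge cuts because vertex-neighborhoods do not satisfy as clean a submodular inequality. I would resolve this by a standard uncrossing argument: if a newly selected minimal witness $C_{i+1}$ intersects an already-processed $C_j$, then one of $C_{i+1}\setminus C_j$ or $C_{i+1}\cap(V\setminus(C_j\cup\Gamma(C_j)))$ should be a strictly smaller connected subset with outside-neighborhood still less than $k$, contradicting the minimality of $C_{i+1}$. Formalizing this uncrossing step is the technical heart of the proof and is precisely where one invokes (or reproves) the structural result of Orenstein and Ron; modulo it, the counting is identical to the edge-connectivity case.
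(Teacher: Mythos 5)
Your route is genuinely different from the paper's: the paper does not reprove the structural statement at all, but deduces it from Corollary~19 of Orenstein and Ron (which gives, for a directed graph that is $\varepsilon$-far from $k$-vertex connected, at least $\frac{\varepsilon m}{2k}$ subsets of size at most $\frac{2kn}{\varepsilon m}$ with small in- or out-neighborhood), combined with their reduction showing that if the undirected $G$ is $\varepsilon$-far then so is the bidirected version, and finally the observation that one may pass from $C$ to a connected component $C'$ of $G[C]$ without increasing $\Gamma(C)$. You instead attempt a from-scratch ``repair'' argument. That would be more self-contained if it worked, but as written it has a real gap.

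The gap is exactly the step you flag and then defer: the pairwise disjointness of the greedy witnesses. Your procedure only enlarges $|\Gamma(C_j)|$ to $k$ in the \emph{current} graph; nothing prevents a later minimal witness $C_{i+1}$ from overlapping $C_j$, and the uncrossing claim you propose is not correct as stated --- neither $C_{i+1}\setminus C_j$ nor $C_{i+1}\cap\bigl(V\setminus(C_j\cup\Gamma(C_j))\bigr)$ need induce a connected subgraph, and their neighborhoods need not stay below $k$ (for vertex cuts, $|\Gamma(A\setminus B)|$ can exceed both $|\Gamma(A)|$ and $|\Gamma(B)|$). Since the existence of \emph{many disjoint} small witnesses is precisely the content of the lemma, deferring this step to ``invoke or reprove Orenstein--Ron'' makes the argument circular rather than a proof. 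A secondary error: you assert that $G'[C_i]$ connected implies $G[C_i]$ connected because ``edges are only added,'' but the implication goes the wrong way --- an added edge with both endpoints in $C_i$ can be what makes $G'[C_i]$ connected. This particular issue is repairable exactly as the paper does it (replace $C_i$ by a connected component of $G[C_i]$; its neighborhood in $G$ only shrinks), but the disjointness step needs an actual argument or an explicit appeal to~\cite{OR11:eulerianity}.
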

\begin{proof}[Proof sketch]
	In Corollary 19 in~\cite{OR11:eulerianity}, it is proven that for any directed graph $G$ that is $\varepsilon$-from $k$-vertex connected, then there exists at least $\frac{\varepsilon m}{2k}$ subsets $C$ of size at most $\frac{2kn}{\varepsilon m}$, and either $\Gamma^+(C)< k$ or $\Gamma^-(C) < k$. 
	
	On the other hand, in Section 5.3 in~\cite{OR11:eulerianity}, it is proven that if $G$ is $\varepsilon$-far from $k$-vertex connected, then the corresponding directed graph $G'$ that is obtained by turning each undirected edge $(u,v)$ into directed edges $\agbracket{u,v}$ and $\agbracket{v,u}$ is $\varepsilon$-far from being $k$-vertex connected. Therefore, there exists at least $\frac{\varepsilon m}{2k}$ subsets $C$ in $G'$ of size at most $\frac{2k n}{\varepsilon m}$, and either $\Gamma_{G'}^+(C)< k$ or $\Gamma_{G'}^-(C) < k$. This directly implies that the corresponding set $C$ in $G$ satisfies that $\Gamma_{G}(C)<k$. Finally, if $G[C]$ is not connected, then we can replace $C$ by one maximal subset $C'\subset C$ such that $G[C']$ is connected. Note that $\Gamma_{G}(C')\leq \Gamma_G(C)<k$. This completes the proof of the lemma.
\end{proof}

\begin{theorem}~\label{thm:kvertexconn}
	Let $k\le O(n^{\varepsilon/(4+\varepsilon)})$. There exists a single-pass dynamic streaming tester for $k$-vertex connectivity with post-processing time and space complexity $\tilde{O}(\frac{k^{1+\varepsilon/4}}{\varepsilon}\cdot n^{1-\varepsilon/4})$.
\end{theorem}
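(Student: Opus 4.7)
The plan is to adapt our $k$-edge connectivity tester (Algorithm~\ref{alg:test_kedgeconnectivity}) to the vertex setting by replacing the edge-neighborhood encoding $\edgesketch^v$ with the vertex-neighborhood encoding $\vertexsketch^v$, and invoking Lemma~\ref{lem:eps_far_vertex_conn} in place of Lemma~\ref{lem:eps_far_edge_connectivity}. As a first step, any $k$-vertex connected graph has minimum degree at least $k$ and hence $m \ge nk/2$, so we may safely reject whenever $m < nk/2$. Conditioned on $m \ge nk/2$, Lemma~\ref{lem:eps_far_vertex_conn} becomes very convenient: if $G$ is $\varepsilon$-far from $k$-vertex connected then there are $s \ge \varepsilon m/(2k) \ge \varepsilon n/4$ disjoint subsets $C_i$, each of size $|C_i| \le 2kn/(\varepsilon m) \le 4/\varepsilon$, with $G[C_i]$ connected and $|\Gamma(C_i)| < k$.

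Next I would sample every vertex with probability $p := (\varepsilon n/(16k))^{-\varepsilon/4}$, abort if more than $16np$ vertices are sampled, and for each sampled $v\in S$ maintain two sketches: an AGM sketch of $G[S]$ (Theorem~\ref{thm:agm12_spanning_tree}) and a $(k-1+\lceil 4/\varepsilon\rceil)$-sparse recovery sketch of $\vertexsketch^v$ (Lemma~\ref{lem:ksparse}). In post-processing I would recover a spanning forest $F$ of $G[S]$, and for each component $C\in F$ attempt to recover $\vertexsketch^C:=\sum_{v\in C}\vertexsketch^v$ from the composed sparse-recovery sketches; if the recovery succeeds and $|\supp(\vertexsketch^C)|-|C| < k$, the algorithm rejects, otherwise it accepts. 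The sample size is $np = \tilde{O}(k^{\varepsilon/4} n^{1-\varepsilon/4})$, each sparse-recovery sketch has size $\tilde{O}(k/\varepsilon)$, and the AGM sketch contributes only $\tilde{O}(np)$, yielding the claimed space and post-processing time of $\tilde{O}\!\left(\tfrac{k^{1+\varepsilon/4}}{\varepsilon}\, n^{1-\varepsilon/4}\right)$.

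For correctness, completeness is immediate: in a $k$-vertex connected $G$, every component $C$ of $F$ with $|C|<n$ has $|\Gamma(C)|\ge k$ in $G$, so either the sparse recovery reports \textbf{Fail} or the recovered support has size at least $|C|+k$, and the algorithm accepts. For soundness I would mirror the independent-family argument of Claim~\ref{lem:independentSet}: since each $C_i$ has fewer than $k$ neighbors and the $C_i$'s are disjoint, the ``conflict graph'' in which $C_i\sim C_j$ iff there is an edge between them has maximum degree less than $k$, so a greedy smallest-first construction yields a subfamily $I$ of size at least $s/k \ge \varepsilon n/(4k)$ with pairwise disjoint closed neighborhoods and $\sum_{i\in I}|C_i|\le n/k$. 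Let $\mathcal{E}_i$ denote the event ``$C_i\subseteq S$'' and $\mathcal{F}_i$ the event ``$\Gamma(C_i)\cap S=\emptyset$''. By construction of $I$, the events $\{\mathcal{E}_i\}_{i\in I}$ are mutually independent and every $\mathcal{F}_j$ is independent of every $\mathcal{E}_i$. An AM--GM calculation identical in structure to the one in the proof of Theorem~\ref{thm:kedgeconn} combined with our choice of $p$ gives $\Pr[\bigvee_{i\in I}\mathcal{E}_i] \ge 1 - \exp(-(\varepsilon n/(4k))\cdot p^{4/\varepsilon}) \ge 15/16$, and, conditional on this event, some $\mathcal{F}_i$ holds with probability at least $(1-p)^{k}=\Omega(1)$. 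When $\mathcal{E}_i\wedge\mathcal{F}_i$ occurs, $C_i$ is isolated from the rest of $S$ in $G$, so it appears verbatim as a connected component of $F$ and the sparse-recovery check fires.

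The main obstacle is the joint calibration of the three parameters. Making $(\varepsilon n/(4k))\cdot p^{4/\varepsilon}$ a constant forces $p$ to be at least roughly $(\varepsilon n/k)^{-\varepsilon/4}$, whereas keeping $(1-p)^{k}=\Omega(1)$ forces $pk = O(1)$; these two constraints are simultaneously satisfiable precisely in the regime $k \le O(n^{\varepsilon/(4+\varepsilon)})$ required by the hypothesis. With the above $p$, multiplying the per-vertex sketch cost $\tilde{O}(k/\varepsilon)$ by the sample size $\tilde{O}(k^{\varepsilon/4}n^{1-\varepsilon/4})$ lands exactly at the target budget, and the rest of the argument is routine failure-probability bookkeeping that parallels Appendix~\ref{app:edge-conn}.
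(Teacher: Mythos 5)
Your proposal is correct and follows essentially the same route as the paper's proof sketch: reject when $m<nk/2$, sample with $p=(\varepsilon n/16k)^{-\varepsilon/4}$, combine an AGM sketch of $G[S]$ with an $O(k+1/\varepsilon)$-sparse recovery of $\vertexsketch^v$ to read off $\Gamma(C)$ for each recovered component, and transfer the independent-family and AM--GM analysis from the $k$-edge connectivity tester via Lemma~\ref{lem:eps_far_vertex_conn}. Your write-up is in fact somewhat more explicit than the paper's sketch about the conflict-graph construction and the calibration forcing $k\le O(n^{\varepsilon/(4+\varepsilon)})$, but the argument is the same.
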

\begin{proof}[Proof sketch]
	We can also simply consider the case that $m\geq nk/2$, since otherwise the graph cannot be $k$-vertex connected and we can directly reject.
	Our approach for testing $k$-connectivity is similar to testing $k$-edge connectivity.
	The difference here is that now we cannot use the $(k-1)$-sparse recovery sketch for the vector $\edgesketch^v$. Instead, for each vertex $v\in S$, we will maintain an exact $k'$-sparse recovery sketch of the vector $\vertexsketch^v$ (defined in section \ref{sec:notation}), $\mathcal{S}_{k'}(\vertexsketch^v)$, for $k'=\frac{4}{\varepsilon }+k$. Then for each detected connected component $C$ of size smaller than $4/\varepsilon$ in $G[S]$ (by AGM sketch), recover $\vertexsketch^C:=\sum_{v\in C}\vertexsketch^v$ from the sketch $\mathcal{S}_{k'}(\vertexsketch^C)=\sum_{v\in C}\mathcal{S}_{k'}(\vertexsketch^v)$. If it succeeds, we get the set $C\bigcup \Gamma(C)$, and since we know $C$, we get $\Gamma(C)$. If $|\Gamma(C)|<k$, we {\bf reject}. For any $k$-vertex connected graph, the tester will never {\bf reject} if all the sparse recover sketches return correctly, which happens with high probability. On the other hand, if $G$ is $\varepsilon$-far from $k$-vertex connected, by similar analysis as in $k$-edge connectivity together with Lemma \ref{lem:eps_far_vertex_conn}, we know that with high probability, there is a subset $C\subseteq S$ such that $G[C]$ is a connected component in $G[S]$, $|\Gamma(C)|<k$ and $|C|\le 4/\varepsilon$, and conditioned on this the algorithm will successfully recover $\Gamma(C)$, and reject with high probability. Here to make the analysis work, we have to set the sampling probability $p:=(\varepsilon n/16k)^{-\varepsilon/4}$, so the space used is $\tilde{O}(k'\cdot k^{\varepsilon/4}\cdot n^{1-\varepsilon/4})=\tilde{O}(\frac{k^{1+\varepsilon/4}}{\varepsilon}\cdot n^{1-\varepsilon/4})$. Since the analysis is almost the same as $k$-edge connectivity, we omit the details here. 
\end{proof}

\subsection{Testing Eulerianity}\label{app:testingEuler} 
Note that the algorithm for connectivity testing can be directly used to testing Eulerianity. A graph $G$ is Eulerian if there is a path in the graph that traverses each edge exactly once, which is equivalent to that $G$ is connected and the degrees of all vertices are even or exactly two vertices have odd degrees. Note that if graph $G$ is $\varepsilon$-far from being Eulerian then either $G$ has $\Omega(\varepsilon n)$ connected components (i.e. far from being connected) or has $\Omega(\varepsilon n)$ vertices of odd degree (cf., \cite{GR02:property,PR02:diameter}). Then one can test Eulerianity by first invoking the previous algorithm on testing connectivity, and then sample $O(1/\varepsilon)$ vertices and check if some sampled vertex has odd degree. The post-processing time and space complexity of the final algorithm are $\tilde{O}(n^{1-c\cdot\varepsilon})$ for some universal constant $c$. 

\end{document}